\long\def\com#1{}
\long\def\xxx#1{}
\long\def\arxiv#1#2{#2}		
\long\def\abbr#1#2{#2}			
\newcommand{\abcite}[2]{\abbr{\cite{#1}}{\cite{#1,#2}}}
\newcommand{\ie}{{\em i.e.}\xspace}
\newcommand{\eg}{{\em e.g.}\xspace}
\newtheorem{definition}{Definition}[section]
\newtheorem{theorem}{Theorem}[section]
\newtheorem{lemma}{Lemma}[section]
\newcommand{\angles}[1]{\langle{#1}\rangle}
\newcommand{\hash}{\ensuremath{\mathsf{Hash}}}
\newcommand{\qsc}{\ensuremath{\mathsf{QSC}}\xspace}
\newcommand{\proposal}[1]{\ensuremath{\langle\mathsf{proposal}\ {#1}\rangle}}
\newcommand{\broadcast}{\ensuremath{\mathsf{Broadcast}}\xspace}
\newcommand{\unicast}{\ensuremath{\mathsf{Unicast}}\xspace}
\newcommand{\choosem}{\ensuremath{\mathsf{ChooseMessage}}\xspace}
\newcommand{\waitm}{\ensuremath{\mathsf{WaitMessage}}\xspace}
\newcommand{\randval}{\ensuremath{\mathsf{RandomValue}}\xspace}
\newcommand{\deliver}{\ensuremath{\mathsf{Deliver}}\xspace}
\newcommand{\receive}{\ensuremath{\mathsf{Receive}}\xspace}
\newcommand{\tlcmsg}[1]{\angles{#1}}
\newcommand{\tlcreq}{\ensuremath{\mathsf{req}}}
\newcommand{\tlcack}{\ensuremath{\mathsf{ack}}}
\newcommand{\tlcwit}{\ensuremath{\mathsf{wit}}}
\newcommand{\tlc}{\ensuremath{\mathsf{TLC}}\xspace}
\newcommand{\tlcr}{\ensuremath{\mathsf{TLCR}}\xspace}
\newcommand{\tlcb}{\ensuremath{\mathsf{TLCB}}\xspace}	
\newcommand{\tlcw}{\ensuremath{\mathsf{TLCW}}\xspace}
\newcommand{\tlcf}{\ensuremath{\mathsf{TLCF}}\xspace}
\newcommand{\tsb}{\ensuremath{\mathsf{TSB}}\xspace}
\newcommand{\kvwrite}{\ensuremath{\mathsf{Write}}\xspace}
\newcommand{\kvread}{\ensuremath{\mathsf{Read}}\xspace}
\newcommand{\kvwait}{\ensuremath{\mathsf{WaitCache}}\xspace}
\newcommand{\qscod}{\ensuremath{\mathsf{QSCOD}}\xspace}
\begin{document}

\title{Que Sera Consensus: \\
	Simple Asynchronous Agreement with \\
	Private Coins and Threshold Logical Clocks}


\author{Bryan Ford}
\affil{Swiss Federal Institute of Technology Lausanne (EPFL)}
\author{Philipp Jovanovic}
\affil{University College London (UCL)}
\author{Ewa Syta}
\affil{Trinity College Hartford}

\maketitle

\begin{abstract}
It is commonly held that asynchronous consensus
is much more complex, difficult, and costly
than partially-synchronous algorithms,
especially without using common coins.
This paper challenges that conventional wisdom
with {\em que sera consensus} (\qsc), an approach to consensus that cleanly decomposes
the agreement problem from that of network asynchrony.
\qsc uses only private coins
and reaches consensus in $O(1)$ expected communication rounds.
It relies on ``lock-step'' synchronous broadcast,
but can run atop a {\em threshold logical clock} (\tlc) algorithm
to time and pace partially-reliable communication
atop an underlying asynchronous network.
This combination
is arguably simpler than partially-synchronous consensus approaches
like (Multi-)Paxos or Raft with leader election,
and is more robust to slow leaders
or targeted network denial-of-service attacks.
The simplest formulations of \qsc atop \tlc
incur expected $O(n^2)$ messages and $O(n^4)$ bits per agreement,
or $O(n^3)$ bits with straightforward optimizations.
An on-demand implementation,
in which clients act as ``natural leaders'' to execute the protocol
atop stateful servers that merely implement passive key-value stores,
can achieve $O(n^2)$ expected communication bits per client-driven agreement.
\end{abstract}


\maketitle

\section{Introduction}
\label{sec:intro}

Most consensus protocols deployed in practice
are derived from Paxos~\cite{lamport98parttime,lamport01paxos},
which relies on leader election and failure detection via timeouts.
Despite decades of refinements and
reformulations~\abcite{lamport01paxos,boichat03deconstructing,ongaro14search,renesse15paxos}{howard15raft,chand19formal},
consensus protocols remain complex,
bug-prone~\cite{abraham17revisiting,leesatapornwongsa16taxdc}
even with formal verification~\cite{chand19formal},
and generally difficult to understand or implement correctly.
Because they rely on network synchrony assumptions for liveness,
their performance is vulnerable to slow leaders
or targeted network denial-of-service
attacks~\cite{clement09making,amir11byzantine}.

Fully-asynchronous consensus algorithms~\cite{rabin83randomized,ben-or83another,bracha85asynchronous,canetti93fast,cachin05random,friedman05simple,moniz06randomized}
address these performance vulnerabilities in principle,
but are even more complex,
often slow and inefficient in other respects,
and rarely implemented in practical systems.
The most practical asynchronous consensus algorithms in particular
rely on common coins~\abcite{rabin83randomized,canetti93fast,cachin02secure,cachin05random,friedman05simple,correia11byzantine,miller16honey,abraham19vaba,syta17scalable}{ben-or85fast,correia06consensus,mostefaoui14signature,duan18beat,abraham19vaba},
which in turn require even-more-complex distributed
setup protocols~\cite{cachin02asynchronous,zhou05apss,kate09distributed,kokoris19bootstrapping}.

This paper makes no attempt to break any complexity-theoretic records,
but instead challenges the conventional wisdom
that fully-asynchronous consensus
is inherently more complex, difficult, or inefficient
than partially-synchronous leader-based approaches.
To this end we introduce {\em que sera consensus} (\qsc),
a randomized consensus algorithm that relies only on private coins
and is expressible in 13 lines of pseudocode
(Algorithm~\ref{alg:qsc} on page~\pageref{alg:qsc}).
This algorithm relies on neither
leader-election nor view-change nor common-coin setup protocols
to be usable in practice.
\qsc does assume private, in-order delivery between pairs of nodes,
but this requirement is trivially satisfied in practice
by communicating over TLS-encrypted TCP connections,
for example~\cite{rfc793,rfc8446}.

\qsc also relies on
a new {\em threshold synchronous broadcast} (\tsb) communication abstraction,
in which coordinating nodes operate logically in lock-step,
but only a subset of nodes' broadcasts in each step may arrive.
\tsb provides each node an operation $\broadcast(m) \rightarrow (R,B)$,
which attempts to broadcast message $m$,
then waits exactly one (logical) time step.
\broadcast then returns a \emph{receive set} $R$ and a {\em broadcast set} $B$,
each consisting solely of messages sent by nodes in the same time step.

The level of reliability a particular \tsb primitive guarantees
is defined by three parameters:
a {\em receive threshold} $t_r$,
a {\em broadcast threshold} $t_b$,
and a {\em spread threshold} $t_s$.
A $\tsb(t_r,t_b,t_s)$ primitive guarantees
that on return from \broadcast on any node,
the returned receive set $R$ contains
the messages sent by at least $t_r$ nodes.
Further, the returned broadcast set $B$ contains
messages broadcast by at least $t_b$ nodes
\emph{and} reliably delivered to
(\ie, appearing in the returned $R$ sets of)
at least $t_s$ nodes in the same time step,
provided the receiving nodes have not (yet) failed during the time step.

To implement this synchronous broadcast abstraction
atop asynchronous networks,
we introduce a class of protocols we call
{\em threshold logical clocks} (\tlc).
Like Lamport clocks~\cite{lamport78time,raynal92about},
\tlc assigns integer numbers to communication events
independently of wall-clock time.
Also like Lamport clocks but unlike
vector clocks~\abcite{fidge88timestamps}{fischer82sacrificing,liskov86highly,mattern89virtual,fidge91logical,raynal92about}
or matrix clocks~\abcite{wuu84efficient,drummond03reducing}{sarin87discarding,ruget94cheaper,raynal92about},
all communicating nodes share a common logical time.
Unlike Lamport clocks,
which merely label arbitrary communication timelines,
\tlc not only labels but also actively {\em paces} communication
so that all nodes progress through logical time in ``lock-step'' --
although different nodes may reach a logical time step
at vastly different real (wall-clock) times.
Nodes that fail (crash) may be conceptually viewed as
reaching some logical time-steps only after an infinite real-time delay.

\tlcr, a simple {\em receive-threshold} logical clock algorithm,
implements $\tsb(t_r,0,0)$ communication for a configurable threshold $t_r$,
in 11 lines of pseudocode
(Algorithm~\ref{alg:tlcr} on page~\pageref{alg:tlcr}).
\tlcb, a {\em broadcast-threshold} logical clock algorithm,
builds on \tlcr to implement $\tsb(t_r,t_b,n)$
\emph{full-spread broadcast} communication
in five lines of pseudocode
(Algorithm~\ref{alg:tlcb} on page~\pageref{alg:tlcb}).
Full-spread broadcast ensures that
at least $t_b$ nodes' messages in each round
reach {\em all} nodes that have not failed by the end of the round,
as required by \qsc.
In a configuration with  $n \ge 3f$ nodes where at most $f$ nodes can fail,
\qsc atop \tlcb (atop \tlcr) ensures that each consensus round
enjoys at least a $1/3$ probability of successful commitment,
yielding three expected consensus rounds per agreement.

This combination represents a complete asynchronous consensus algorithm,
expressible in less than 30 lines of pseudocode total,
and requiring no leader election or common coin setup or other dependencies
apart from standard network protocols like TCP and TLS.
To confirm that the pseudocode representation is not hiding too much complexity,
Appendix~\ref{sec:erl} presents a fully-working model implementation
of \qsc, \tlcb, and \tlcr
in only 37 lines of \href{https://www.erlang.org}{Erlang},
not including test code.

\qsc over \tlcb is usable in $n=2f+1$ configurations
only in the special (but common in practice) case of $f=1$ and $n=3$.
Alleviating this restriction,
\tlcw (Algorithm~\ref{alg:tlcw} on page~\pageref{alg:tlcw})
directly implements $\tsb(t_b,t_b,t_s)$ communication
for configurable $t_b$ and $t_s$,
by proactively confirming the delivery of $t_b$ messages to $t_s$ nodes each,
similar to signed echo broadcast~\cite{reiter94secure}
or witness cosigning~\cite{syta16keeping}
as used in other recent consensus
protocols~\cite{cachin01secure,kokoris16enhancing,abraham19vaba}.
\tlcf (Algorithm~\ref{alg:tlcf} on page~\pageref{alg:tlcf}),
in turn,
implements full-spread $\tsb(t_r,t_b,n)$ communication atop \tlcw
provided $t_r+t_s > n$.
\qsc atop \tlcf (atop \tlcw)
supports minimal $n = 2f+1$ configurations for any $f \ge 0$,
and ensures that each consensus round succeeds
with at least $1/2$ probability,
for two expected consensus rounds per successful agreement.

\qsc incurs only $O(n)$ bits of communication per round
if messages are constant-size.
The \tlc algorithms incur $O(n^4)$ bits per round
if implemented na\"ively,
but this is easily reduced to $O(n^3)$ with simple optimizations.

Further efficiency improvements are feasible with \qscod,
an {\em on-demand} approach to implementing \qsc.
In \qscod, clients wishing to commit transactions
are responsible for driving communication and protocol progress,
and the stateful consensus nodes merely implement passive key-value stores.
\qscod clients effectively serve as ``natural leaders''
to drive communication and consensus efficiency
using only $O(n^2)$ expected bits per client-driven agreement.
In transactional applications,
contention can require some clients to retry
when other clients' proposed transactions ``win.''
The communication costs of retries
may be mitigated using classic techniques such as
by exponential backoff as in CSMA/CD~\cite{ieee802-3},
or by committing batches of gossipped transactions
together in blocks as in	
Bitcoin~\cite{nakamoto08bitcoin}.

In summary, this paper's main contributions are
(a) {\em threshold synchronous broadcast} (\tsb),
a lock-step broadcast communication abstraction
with parameterized delivery thresholds;
(b) {\em que sera consensus} (\qsc),
a simple consensus protocol that builds atop the synchronous \tsb abstraction
but requires neither leader election, view changes, nor common coins; and
(c) {\em threshold logical clocks} (\tlc),
a framework for timing and pacing group communication
that builds lock-step \tsb communication primitives
atop asynchronous underlying networks.

\com{	for PODC submission only
\emph{ Note:
this paper extends and formalizes
ideas proposed informally in an online preprint by the same authors.
An anonymized version of the preprint is available
for reference~\cite{ford19threshold}.
The earlier preprint is not published or under submission.}
}
This paper extends and formalizes
ideas first proposed informally in an earlier preprint
outlining the principles
underlying threshold logical clocks~\cite{ford19threshold}.

\section{Background}
\label{sec:bg}

There are many different formulations of consensus 
and closely-related problems
such as atomic broadcast~\cite{cachin11introduction}.
\qsc's aim is to provide a practical asynchronous consensus protocol
functionally equivalent
to Paxos~\cite{lamport98parttime,lamport01paxos}
or Raft~\cite{ongaro14search}.
In particular,
\qsc provides the equivalent of
Multi-Decree Paxos~\cite{lamport98parttime}
or Multi-Paxos~\cite{chand16formal},
where the goal is to agree on not just one value,
but to commit a {\em sequence} of proposed values progressively
to form a total order.

Because deterministic algorithms
cannot solve asynchronous consensus~\cite{fischer85impossibility},
\qsc relies on randomness for symmetry-breaking~\cite{aspnes03randomized}.
Like Ben-Or's early exponential-time randomized
protocol~\cite{ben-or83another}
but unlike the vast majority of more efficient successors,
\qsc relies only on {\em private randomness}:
coins that each node flips independently of others,
as provided by the random number generators
standard in modern processors and operating systems.
A key goal in particular is not to rely on {\em common coins},
where all nodes choose the same random values.
While protocols based on
secret sharing~\cite{shamir79share,stadler96publicly,schoenmakers99simple}
can produce common coins or public randomness
efficiently~\cite{canetti93fast,cachin05random,syta17scalable},
robust asynchronous {\em setup} of common coins
is essentially as difficult as asynchronous consensus
itself~\cite{cachin02asynchronous,zhou05apss,kate09distributed,kokoris19bootstrapping,ford19threshold}.

\subsection{System model and threat model assumptions}
\label{sec:bg:model}

We assume as usual a group of $n$ nodes communicating over a network
by sending and receiving messages.
A node broadcasts a message to the group
by sending $n$ identical messages,
one to each member including itself.

We assume nodes follow the protocols faithfully as specified.
Nodes can fail, but only by crashing cleanly and permanently,
producing no more messages after the crash.
While it appears readily feasible to extend \qsc and \tlc
to account for Byzantine node behavior~\cite{ford19threshold},
we leave this goal for future work.

We make the standard asynchronous model assumption
that the network {\em eventually} delivers every message,
but only after an arbitrary finite delay of a (network) adversary's choosing.
For simplicity, \qsc also assumes that messages
are delivered in-order between pairs of nodes.
Both assumptions are satisfied in practice if nodes communicate
over TCP~\cite{rfc793} or another reliable, ordered
transport~\cite{rfc908,rfc4960,ford07structured}.
We assume ordered connections never fail unless one endpoint fails:
\eg, timeouts are disabled
and connections are protected against
reset attacks~\cite{luckie15resilience}.

\qsc further assumes either that nodes communicate over private channels
(\eg, encrypted with TLS~\cite{rfc8446}),
or that the network adversary is
{\em content-oblivious}~\cite{aspnes03randomized} or
unable to look into the content of messages or process memory.
Given the prevalence of deep-packet inspection technologies
that intelligent network adversaries can readily employ,
the use of encrypted channels seems safer
than obliviousness assumptions in today's Internet.

\section{Threshold Synchronous Broadcast (\tsb)}
\label{sec:tsb}


Before describing \qsc,
we first introduce a conceptually simple collective communication abstraction
we call {\em threshold synchronous broadcast} (\tsb).
\tsb presumes that a group of $n$ communicating nodes conceptually operates
not in the asynchronous model above
but in lock-step synchronous rounds,
which we will call {\em time steps} or just {\em steps}.
In each step, each node in the group that has not (yet) failed
broadcasts a message to the others,
then receives some subset of all messages sent in that round.
Messages sent are tied to and received {\em only} in the same time-step:
any messages a node does not receive in a given time-step
are simply ``lost'' to that node forever and are never delivered late.

For now we treat threshold broadcast as a primitive API
that a (slightly unrealistic) underlying network might conceivably provide.
We will later develop algorithms to implement this abstraction
atop asynchronous networks.

A \tsb primitive does not in general offer perfect communication reliability.
\tsb instead guarantees reliability
only so as to meet certain threshold parameters,
hence the name.
We say that a broadcast primitive provides
$\tsb(t_r,t_b,t_s)$ reliability if:
(a) it guarantees that each node receives the messages broadcast by
at least $t_r$ nodes in the same time-step, and
(b) it guarantees that the messages sent by at least $t_b$ nodes
are each reliably delivered or {\em spread} to at least $t_s$ nodes each.
A perfectly-reliable \tsb primitive would be $\tsb(n,n,n)$,
guaranteeing that every message sent in each round reaches every node.
A completely-unreliable \tsb primitive would be $\tsb(0,0,0)$,
which makes no message delivery guarantees at all
and hence might not be very useful,
although it {\em might} sometimes deliver some messages.

For simplicity, we assume time is measured in integer units,
as if each broadcast were a ``real-time'' operation
taking exactly one unit of time.
That is, each node broadcasts exactly one message at time-step $0$
intended to be received at time-step $1$,
at step $1$ each node broadcasts
exactly one message to be received at step $2$,
and so on.

\com{
Threshold broadcast does not guarantee that a particular broadcast message
will actually be received by all nodes in the group, however --
or by anyone, for that matter.
It guarantees only that a {\em threshold number} of nodes' messages
sent any time-step $s$ are delivered to every node
that reaches step $s+1$ without failing.
}

We represent the threshold broadcast primitive as a single API function,
$\broadcast(m) \rightarrow (R,B)$.
When a node $i$ calls $\broadcast(m)$ at time-step $s$,
the network broadcasts message $m$,
waits exactly one time-step,
then returns two message sets $R$ and $B$ to the caller.
Returned set $R$ is a set of messages
that node $i$ received during time-step $s$.
Returned set $B$ indicates a set of messages
that were each broadcast reliably
to at least $t_s$ nodes each.

When a message is reliably broadcast to a node $j$,
this means that node $j$ will receive the message in the same time-step $s$,
{\em unless} $j$ fails before time-step $s$ completes.
Thus, a failed node $j$ may count toward
towards a broadcast message's spread threshold $t_s$,
provided we can be certain
that node $j$ {\em would} receive the message had it not failed.\footnote{
An alternative, perhaps mathematically cleaner conception of node failures
is to presume that {\em all} nodes always ``eventually''
reach {\em all} time steps.
But when a node $j$ ``fails'' before step $s$,
this simply means that $j$ reaches $s$ after an infinite real-time delay,
\ie, $j$ reaches $s$ at wall-clock time $\infty$.
Adopting this viewpoint,
\tsb's promise that a message $m \in B$
will ``eventually'' reach at least $t_s$ nodes becomes unconditional,
independent of node failure.
This is because any failed node $j$ in that set
conceptually {\em does} reach step $s$ and receive $m$,
only at real-time $\infty$.
}

A $\tsb(t_r,t_b,t_s)$ primitive guarantees on return from \broadcast
that $R$ contains the messages broadcast by at least $t_r$ nodes in step $s$,
and that $B$ contains messages broadcast by at least $t_b$ nodes,
each of which is reliably delivered to at least $t_s$ nodes during step $s$.
\tsb makes no other delivery guarantees, however.
For example, \tsb makes no guarantee even that $i$'s own message $m$
is within the sets $R$ or $B$ returned to $i$.
Further, two nodes $i$ and $j$
may see different received sets $R_i \ne R_j$
and/or different broadcast sets $B_i \ne B_j$
returned from their respective \broadcast calls
in the same step.
And two messages $m_1 \in B_i$ and $m_2 \in B_i$,
both returned from the same node $i$'s \broadcast call,
may have been broadcast to {\em different} node sets
$N_1$ and $N_2$ respectively:
\tsb guarantees only that $|N_1| \ge t_s$ and $|N_2| \ge t_s$
and not that $N_1 = N_2$.

\begin{definition}
A network offers a $\tsb(t_r,t_b,t_s)$ primitive provided:
\begin{itemize}
\item	Lock-step synchrony:
	A call to $\broadcast(m)$ at any integer time-step $s$
	completes and returns at time-step $s+1$,
	unless the node fails before reaching time-step $s+1$.
\item	Receive threshold:
	If a node $i$'s call to $\broadcast(m)$ at step $s$ returns $(R,B)$,
	then there is a node set $N_R \subseteq \{1,\dots,n\}$
	such that $|N_R| \ge t_r$,
	and $R$ contains exactly the set of messages $m_j$
	broadcast by nodes $j \in N_R$ during step $s$.
\item	Broadcast threshold:
	If a node $i$'s call to $\broadcast(m)$ at step $s$ returns $(R,B)$,
	then there is a node set $N_B \subseteq \{1,\dots,n\}$
	such that $|N_B| \ge t_b$,
	and $B$ contains exactly the set of messages $m_j$
	broadcast by nodes $j \in N_B$ during step $s$.
\item	Spread threshold:
	If a node $i$ called $\broadcast(m)$ at step $s$,
	which returned $(R,B)$ such that a message $m' \in B$,
	then there are at least $t_s$ nodes
	whose message sets $R$ to be returned from \broadcast in step $s$
	will include message $m'$.
\end{itemize}
\label{def:tsb}
\end{definition}

The special case of $\tsb(t_r,t_b,n)$, where $t_s=n$,
represents a particularly-useful {\em full-spread} broadcast primitive.
Such a primitive guarantees that in each time-step $s$,
each reliably-broadcast message returned in any node's $B$ set
is delivered to all $n$ nodes during time-step $s$,
apart from any nodes that fail before step $s$ completes.
A full-spread \tsb has the useful property
that the $B$ set returned to any node
is a subset of the $R$ set returned to any (other) node.

\begin{lemma}
In a network of $n$ nodes offering a full-spread $\tsb(t_r,t_b,n)$ primitive,
if a node $i$'s call to $\broadcast(m_1)$ at step $s$
returns $(R_1,B_1)$,
and a node $j$'s call to $\broadcast(m_2)$ at the same step $s$
returns $(R_2,B_2)$,
then $B_1 \subseteq R_2$.
\end{lemma}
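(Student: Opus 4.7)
The plan is to prove the containment $B_1 \subseteq R_2$ by picking an arbitrary element of $B_1$ and showing, via the full-spread guarantee, that it must appear in $R_2$. The key leverage is that $t_s = n$, so every reliably-broadcast message in $B_1$ is promised to every non-failed node in step $s$, and node $j$ qualifies as non-failed because its \broadcast call returned.

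More concretely, first I would fix an arbitrary $m' \in B_1$. By the spread threshold clause of Definition~\ref{def:tsb}, applied to node $i$'s successful \broadcast call at step $s$, there is a set $N'$ of at least $t_s = n$ nodes whose \broadcast calls in step $s$ will return $R$ sets containing $m'$. Since $|N'| \ge n$ and $N' \subseteq \{1,\dots,n\}$, we have $N' = \{1,\dots,n\}$; in particular $j \in N'$.

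Next I would invoke the lock-step synchrony clause, which implies that because node $j$'s call to $\broadcast(m_2)$ at step $s$ actually returned $(R_2, B_2)$, node $j$ did not fail before reaching step $s+1$. Hence $j$'s returned receive set $R_2$ is exactly the set guaranteed by the receive/spread threshold clauses for step $s$, and by the previous paragraph $m' \in R_2$. Since $m' \in B_1$ was arbitrary, this yields $B_1 \subseteq R_2$.

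The proof is essentially an unpacking of the definition, so the main subtlety is not a hard mathematical step but rather a bookkeeping one: making sure the ``provided the receiving nodes have not failed during the time step'' caveat in the spread guarantee is properly discharged for $j$. The footnote in Definition~\ref{def:tsb} makes this caveat vacuous under the ``infinite-delay'' interpretation of failure, but even in the concrete interpretation the fact that $j$'s call returned is precisely what certifies $j$ as a non-failed recipient in step $s$, which is all we need.
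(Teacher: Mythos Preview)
Your proposal is correct and is exactly the approach the paper takes: the paper's own proof is the single sentence ``This property directly follows from the broadcast spread property,'' and your argument is simply a careful unpacking of that sentence, using $t_s=n$ to force $N'=\{1,\dots,n\}$ and the fact that $j$'s \broadcast returned to discharge the non-failure caveat.
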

\begin{proof}
This property directly follows from the broadcast spread property.
\end{proof}

\section{Que Sera Consensus}
\label{sec:qsc}

In this section we describe the 
{\em que sera consensus} (\qsc) protocol,
then analyze its correctness and complexity.

\subsection{Building consensus atop \tsb}
\label{sec:consensus}

First we will define more precisely what properties we desire
from a consensus protocol built atop a \tsb communication abstraction.
We will focus on implementing a multi-consensus protocol,
functionally analogous  
to Multi-Paxos~\cite{lamport98parttime,chand16formal},
where nodes agree on a sequence of values (a log)
instead of just one value as in (single-decree) Paxos~\cite{lamport98parttime}.
For simplicity,
the rest of this paper refers to multi-consensus simply as consensus.

We represent a consensus protocol $\mathcal{P}$ as a process
that runs concurrently and indefinitely (or until it fails) on each of a set of $n$ nodes
communicating via threshold reliable broadcast.
We consider $\mathcal{P}$ to be an algorithm parameterized by four functions:
\choosem, \deliver, \randval, and \broadcast.
The first two function parameters represent
an ``upcall-style'' interface
to the application or higher-level protocol.
$\mathcal{P}$ invokes \choosem
to ask the application to choose the next message
that the application wishes to commit.
$\mathcal{P}$ invokes \deliver
to deliver committed
 messages up to the application.
$\mathcal{P}$'s remaining two function parameters represent
its lower-level interface to the network and operating system.
$\mathcal{P}$ calls \randval to choose a numeric value
using node-private randomness,
and $\mathcal{P}$ calls \broadcast
to broadcast a message using the \tsb primitive
and obtain the broadcast's results one time-step later.

For simplicity of presentation and reasoning,
our formulation of consensus protocols will deliver not just individual messages
but entire {\em histories}, ordered lists
cumulatively representing all messages committed and delivered so far.
An easy and efficient standard practice
is to represent a history as the typically constant-size head
of a tamper-evident log~\cite{schneier99secure,crosby09efficient} or blockchain~\cite{nakamoto08bitcoin},
each log entry containing a hash-link to its predecessor.
Thus, the fact that histories conceptually grow without bound
is not a significant practical concern.

Intuitively,
the key properties we want from $\mathcal{P}$ are liveness, validity, and consistency.
Liveness means that $\mathcal{P}$ regularly keeps advancing time
and delivering progressively-longer histories via \deliver,
forever or until the node fails. 
Validity means that any message delivered by any node
is one that the application recently returned via \choosem
on some (potentially different) node.
Finally, consistency means that
a history delivered in an earlier time-step
is a prefix of any history delivered in a later time-step,
both on the same node and across distinct nodes.

\begin{definition}\label{def:abroadcast}
A multi-consensus protocol $\mathcal{P}$
is a potentially-randomized algorithm that takes function parameters
$(\choosem,\allowbreak \deliver,\allowbreak \randval,\allowbreak \broadcast)$
and behaves as follows:
\begin{itemize}
\item	Liveness:
	If $h$ is the longest history $\mathcal{P}$ has delivered
	by time-step $s$ on some non-failing node $i$,
	or $h = []$ if $\mathcal{P}$ has not yet invoked \deliver
	by step $s$,
	then there is some future time-step $s' > s$
	at which $\mathcal{P}$ invokes $\deliver(h')$
	with some history $h'$ strictly longer than $h$
	(\ie, $|h'| > |h|$).
\item	Validity:
	For some constant $\delta \ge 0$,
	if $\mathcal{P}$ invokes $\deliver(h'\ ||\ [p])$
	at time-step $s'$ on node $j$,
	then $p$ is a proposal $\proposal{i,m,r}$
	that node $i$ returned from an invocation of \choosem
	at some time-step $s \le s'$,
	where $s' - s \le \delta$.
\item	Consistency:
	if $\mathcal{P}$ invokes $\deliver(h)$ at time-step $s$ on node $i$,
	and $\mathcal{P}$ later invokes $\deliver(h')$ at time-step $s' \ge s$
	on node $j$ (either the same or a different node),
	then $h$ is a prefix of $h'$.
\end{itemize}

$\mathcal{P}$'s behavior above is contingent
on its function parameters satisfying operational specifications
described below. 
\end{definition}

The application upcall function parameters \choosem and \deliver
may behave in arbitrary application-specific fashions,
provided they do not interfere with the operation of \qsc
or the lower layers it depends on
(\eg, by corrupting memory,
or de-synchronizing the nodes via unexpected calls to \broadcast).
\choosem always returns some message,
which may be an empty message if the application has nothing useful
to broadcast in a given time-step.
The \randval function must return a numeric value (integer or real)
from a nontrivial random distribution,
containing at least two values
each occurring with nonzero probability,
and from the same random distribution on every node.
The \broadcast function must correctly implement
a threshold reliable broadcast primitive
as described above
in Section~\ref{sec:tsb}.
\com{	BF: note that full-spread is a requirement of the *particular*
	consensus algorithm (\qsc) specified, below,
	not a requirement that we place on *any* consensus algorithm. }

\com{	BF: these general assumptions are now covered for the whole paper
	in \ref{sec:bg:model}.

\paragraph{Adversary Model}

We assume a well-known group of $n$ communicating nodes with an honest majority ($n > 2f$) 
and a crash-stop failure model~\cite{schlichting83fail,cachin11introduction},
where nodes
may stop sending and receiving messages
All nodes communicate using private channels such that the adversary cannot see the internal state or
content of messages sent between honest nodes.
Such a content-oblivious adversary can be implemented 
via TLS~\cite{rfc8446}.
While we assume an asynchronous network, nodes send all messages using the full-spread 
threshold synchronous broadcast primitive $\tsb(t_r, t_b, t_s)$ which reliably broadcasts at least one message per 
time-step: \ie $t_b > 0$ and $t_s = n$, by invoking the \broadcast function.
Lastly, we assume private coins, \ie, that each node uses its own source of randomness, and specifically does not 
depend on any shared or publicly known random values to implement \randval (see below).

\paragraph{Function Parameters}

\qsc is parametrized by four functions, \choosem, \deliver, \randval, and \broadcast and we assume that 
each functions satisfies the following operational specifications.
The application upcall function parameters \choosem and \deliver
may behave in arbitrary application-specific fashions,
provided they do not interfere with the operation of \qsc
or the lower layers it depends on
(\eg, by corrupting memory,
or de-synchronizing the nodes via unexpected calls to \broadcast).
\choosem returns an empty message if the application has nothing useful
to broadcast in a given time-step.
The \randval function must return a numeric value (integer or real)
from a nontrivial random distribution,
containing at least two values
each occurring with nonzero probability,
and from the same random distribution on every node.
To implement \randval, each node uses its own source of randomness, and specifically does not 
depend on any shared or publicly known random values to implement \randval (see below).
For efficiency, we want the probability of choosing any particular priority
to be $O(1/n)$,
so that the probability of the highest-priority proposal tying with any other
is constant,
giving each consensus round a constant success probability.
The \broadcast function must correctly implement
the threshold reliable broadcast primitive as described
in Section~\ref{sec:tsb}.

BF: constant success probability is indeed a property we'd like
for efficiency, but not one that "any" consensus algorithm
formally should need to have for correctness,
and thus shouldn't be a requirement of our formal definition
of what a consensus algorithm is.

}

\subsection{Que Sera Consensus (\qsc) algorithm}
\label{sec:qsc:alg}

\begin{algorithm*}[t!]
\SetCommentSty{textnormal}
\DontPrintSemicolon
\SetKwInOut{Input}{Input}
\SetKwFor{Forever}{forever}{}{end}

\Input{configuration parameters $n,t_r,t_b,t_s$, where
$t_r > 0$, $t_b > 0$, and $t_s = n$}
\Input{function parameters \choosem, \deliver, \randval, \broadcast}
Run the following concurrently on each communicating process $i \in \{1, \dots, n\}$:\;
$h \leftarrow []$	\tcp*{consensus history is initially empty}
\Forever(\tcp*[f]{loop forever over consensus rounds}){}{
	$m \leftarrow$ \choosem() \tcp*{choose some message for node $i$ to propose}
	$r \leftarrow \randval()$ \tcp*{choose proposal priority using private randomness}
	$h' \leftarrow h\ ||\ [\proposal{i,m,r}]$
		\tcp*{append one new proposal to our view of history}
	$(R',B') \leftarrow \broadcast(h')$
		\tcp*{broadcast our proposal and advance one time-step}
	$h'' \leftarrow$ any best history in $B'$
		\tcp*{choose best eligible history we know of so far}
	$(R'',B'') \leftarrow \broadcast(h'')$
		\tcp*{re-broadcast the best proposal we see so far}
	$h \leftarrow$ any best history in $R''$
		\tcp*{choose our best history for next consensus round}
	\If(\tcp*[f]{history $h$ has no possible competition}){$h \in B''$ and $h$ is uniquely best in $R'$}{
		\deliver($h$)
			\tcp*{deliver newly-committed history}
	}
}
\caption{Que Sera Consensus (\qsc)}
\label{alg:qsc}
\end{algorithm*}
Algorithm~\ref{alg:qsc} concisely summarizes
que sera consensus (\qsc),
a simple multi-consensus algorithm
satisfying the above specification.
The \qsc algorithm is a process
that runs on each of the $n$ nodes forever or until the node fails.
Each iteration of the main loop implements a single {\em consensus round},
which may or may not deliver a new history $h$.

The \qsc algorithm depends on \broadcast providing
a full-spread $\tsb(t_r,t_b,n)$ abstraction (Section~\ref{sec:tsb}).
Each consensus round invokes
this \broadcast primitive twice,
thus taking exactly two broadcast time-steps per consensus round.

Each node maintains its own view of history, denoted by $h$,
which increases in size by one entry per round.
Each node does not build strictly on its own prior history
in each round, however,
but can discard its own prior history in favor of adopting one
built by another node.
In this way \qsc's behavior is analogous to Bitcoin~\cite{nakamoto08bitcoin},
in which the ``longest chain'' rule may cause a miner
to abandon its own prior chain in favor of a longer chain on a competing fork.
\qsc replaces Bitcoin's ``longest chain'' rule with a ``highest priority'' rule,
however.

At the start of a round,
each node $i$ invokes \choosem
to choose an arbitrary message $m$ to propose in this round,
possibly empty if $i$ has nothing to commit.
This message typically represents a transaction or block of transactions
that the application running on $i$ wishes to commit,
on behalf of itself or clients it is serving.
Node $i$ also uses \randval to choose a random numeric {\em priority} $r$
using node-private (not shared) randomness.
Based on this information,
node $i$ then appends a new proposal $\proposal{i,m,r}$
to the prior round's history
and broadcasts this new proposed history $h'$ using \broadcast, which
returns sets $R'$ and $B'$.

From the set $B'$ of messages that \tsb promises
were reliably broadcast to all non-failed nodes
in this first \broadcast call,
node $i$ picks any history $h''$ (not necessarily unique)
having priority at least as high as any other history in $B'$,
and broadcasts $h''$.
This second \broadcast call returns history sets $B''$ and $R''$ in turn.
Finally, $i$ picks from the resulting set $R''$
any {\em best} history (Definition~\ref{def:qsc:besth}), \ie, 
any history (again not necessarily unique)
with priority at least as high as any other in $R''$,
as the resulting history for this round and
the initial history for the next round
from node $i$'s perspective.
We define the priority of a nonempty history
as the priority of the last proposal it contains.
Thus, a history $h = [\dots, \proposal{i,m,r}]$
has priority $r$.

\begin{definition}\label{def:qsc:besth}
A history $h$ is {\em best} in a set $H$ if $h \in H$
and no history $h' \in H$
has priority strictly greater than $h$.
\end{definition}

The resulting history each node arrives at in a round
may be either {\em tentative} or {\em final}.
Each node decides separately
whether to consider its history tentative or final,
and nodes may make different decisions on finality in the same round.
Each node $i$ then {\em delivers} the resulting history $h$
to the application built atop \qsc,
via a call to \deliver,
only if node $i$ determined $h$ to be final.
If $i$ decides that $h$ is tentative,
it simply proceeds to the next round,
leaving its view of history effectively undecided
until some future round eventually delivers a final history.

A node $i$ decides that its resulting history $h$ in a round is {\em final} 
if (a) $h$ is in the set $B''$ returned from the second broadcast,
and (b) $h$ is the {\em uniquely best} history in the set $R'$
returned from the first broadcast.

\begin{definition}\label{def:qsc:uniqueh}
A history $h$ is {\em uniquely best} in a set $H$ if $h \in H$
and there is no other history $h' \ne h$
such that $h'$ is also in $H$
and has priority greater than or equal to that of $h$.
\end{definition}

\com{
\begin{definition}\label{def:qsc:finalh}
Given histories $h, h', h''$ and $h'''$, $(R', B') \leftarrow \broadcast(h')$
$(R'', B'') \leftarrow \broadcast(h'')$, where $h''$ is best in $B''$,
we say that $h$ is {\em final} if $h \in B''$ and $h$ is uniquely best in $R'$, \ie, 
$\not \exists h''' \in R'$ with a strictly higher priority than $h$.
\end{definition}
}

This pair of finality conditions is sufficient to ensure
that {\em all} nodes will have chosen exactly the same resulting history $h$
at the end of this round,
as explained below --
even if other nodes may not necessarily realize
that they have agreed on the same history.
Since all future consensus rounds
must invariably build on this common history $h$
regardless of which nodes' proposals ``win'' those future rounds,
node $i$ can safely consider $h$ final and deliver it to the application,
knowing that all other nodes will similarly build on $h$
regardless of their individual finality decisions.

\subsection{Correctness of \qsc}

While the \qsc algorithm itself is simple,
analyzing the correctness of any consensus protocol involves some subtleties,
which we examine first intuitively then formally.
The main challenges are first,
ensuring that the histories it delivers are consistent,
and second,
that it determines rounds to be final and delivers longer histories
``reasonably often.''
This section only states key lemmas and the main theorem;
the proofs may be found in Appendix~\ref{sec:proofs:qsc}.

\subsubsection{Safety}
Consistency is \qsc's main safety property.
We wish to ensure that if at some step $s$ a node $i$ delivers history $h$,
and at some later step $s' \geq s$ any node $j$ delivers history $h'$,
then $h$ is a prefix of $h'$.
That is, every node consistently builds on any history prefix
delivered earlier by any other node.
To accomplish this,
in \qsc each node $i$ delivers a history $h$ only if $i$ can determine
that all other non-failed nodes must also become aware
that $h$ exists and can be chosen in the round,
and that no other node could choose any {\em other} history in the round.

Each node first chooses some best (highest-priority) eligible history $h''$
from the set $B'$
returned by the first \broadcast call.
Set $B'$ includes only {\em confirmed} histories:
those that $i$ can be certain all non-failed nodes
will become aware of during the round.
By the full-spread requirement ($t_s=n$)
on the \tsb primitive,
history $h''$ must be included in the $R'$ sets
returned on all non-failed nodes,
ensuring this awareness requirement
even if other nodes choose different histories.

After the second $\broadcast(h'')$ call, 
all histories returned in $R''$ and $B''$ are confirmed histories.
Further,
any history $h \in B''$ is not just confirmed but {\em reconfirmed},
meaning that all non-failed nodes will learn during the round
not just that $h$ exists but also
{\em the fact that $h$ was confirmed}.
Each node $i$ chooses, as its tentative history
to build on in the next round,
some (not necessarily unique) highest-priority confirmed proposal
among those in $R''$ that $i$ learns about.

Finally, $i$ considers $h$ committed
and actually delivers it to the application
only if $h$ is both reconfirmed (in $B'')$ and {\em uniquely} best
within the broader set $R'$ that includes all confirmed proposals
other nodes {\em could} choose in this round.
These two finality conditions ensure that
(a) all nodes know that $h$ is confirmed
and thus {\em can} choose it as their tentative history, and
(b) all nodes {\em must} choose $h$
because it is uniquely best among all the choices they have.
This does not guarantee that other nodes will {\em know} that $h$ is committed,
however:
another node $j \ne i$ might not observe the finality conditions,
but will nevertheless ``obliviously'' choose and build on $h$,
learning only in some later round that $h$ is final.

\com{
These two finality conditions 
essentially guarantee that $h$ is also the only resulting history
that {\em other} nodes can possibly choose in this round
and that {\em no other history exists}
that $j$ could possibly choose in this round.
This is because such a competing history $h'$
would have to have a priority greater than or equal to $h$ and $i$ would have chosen it in the first place.
Thus, both finality conditions as satisfied from $i$'s perspective, then
$i$ knows both that
(a) all other nodes {\em can} choose $h$, and that
(b) all other nodes {\em have no other competitive history to choose}.
}

\com{simplify or cut
Each node chooses a consensus round's resulting history
from among the set of {\em confirmed} proposals in that round.
A proposal is confirmed if it is included in the $B'$ set
returned to some node in the round's first \broadcast call,
and hence is guaranteed by \tsb to be included in the $R'$ sets
returned to {\em all} nodes by that \broadcast call.
This constraint effectively ensures that each node chooses only histories
that all other non-failed nodes are guaranteed to {\em be aware of}
by the end of the round --
even if other nodes may not choose the same history in the round.

Because each node chooses and broadcasts some confirmed history $h''$
in its second \broadcast call,
all histories in the set $R''$ returned by this call on any node
are also confirmed histories.
Moreover, any history $h$ in the set $B''$ returned by this call on any node
is not just confirmed but also {\em reconfirmed}.
This means that all non-failed nodes learn during the round
not only of $h$'s existence
but also {\em the fact that $h$ is confirmed}.
Furthermore, each node $i$ chooses not just any reconfirmed proposal,
but one with the highest priority of any $i$ learns of in the round
via the returned set $R''$.

The two conditions by which each node $i$ decides
whether to consider a round's resulting history $h$ to be final
essentially guarantee under these selection constraints,
$h$ is also the only resulting history
that {\em other} nodes can possibly choose in this round.
Because $h$ is in $B''$,
node $i$ knows not only that all other nodes know of $h$'s existence
(by virtue of $h$ being in their $R'$ sets)
but also that all other nodes know that $h$ is confirmed
(by virtue of $h$ being in their $R''$ sets).
Thus, $i$ knows that $h$ is among the set of histories
that each other node $j$ will choose its resulting history from.

But will node $j$ actually choose exactly the same history $h$?
By the second finality condition,
node $i$ ensures that {\em no other history exists}
that $j$ could possibly choose in this round.
This is because such a competing history $h'$
would have to have a priority greater than or equal to $h$
in order for $j$ to choose $h'$ over $h$ from $j$'s set $R''$ --
but then $i$ would necessarily see $h'$ in its $R'$ set,
thereby preventing $h$ from being uniquely best in $R'$
and preventing $i$ from considering $h$ final.
Thus, whenever $i$ sees both finality conditions,
$i$ knows both that
(a) all other nodes {\em can} choose $h$, and that
(b) all other nodes {\em have no other competitive history to choose}.

}

A key first step is showing that
any node's (tentative or final) history at any round
builds on {\em some} node's (tentative or final) history
at any prior round.

\begin{restatable}{lem}{lempreservation}
History preservation:
If a consensus round starting at time-step $s$
has initial history $h_{s_i}$ on node $i$,
then at any earlier consensus round starting at step $s' < s$,
there exists some node $j$ whose initial history $h_{s'_j}$ in that round
is a strict prefix of $h_{s_i}$.
\label{lem:preservation}
\end{restatable}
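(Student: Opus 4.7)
The plan is to prove the lemma by tracing the provenance of the initial history $h_{s_i}$ back through the algorithm, and then induct on the round distance $s - s'$.

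The key structural observation is that, by inspection of Algorithm~\ref{alg:qsc}, the value of $h$ that node $i$ carries into the round starting at step $s$ is exactly what node $i$ assigned to $h$ at the end of its previous round, namely some best history in the $R''$ set returned from its second \broadcast call in the round starting at step $s-2$. Any element of that $R''$ set is a message actually broadcast by some node $k$ in the second broadcast step of the round starting at $s-2$, and by the algorithm such a broadcast value was chosen by $k$ as a best history in its $B'$ set from the first broadcast of the same round. Each element of $B'$, in turn, is a message actually broadcast by some node $j$ in the first broadcast step, which has the explicit form $h' = h_{(s-2)_j}\ ||\ [\proposal{j,m,r}]$, where $h_{(s-2)_j}$ is node $j$'s initial history for the round starting at $s-2$. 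Chaining these three derivations yields the base case: $h_{s_i} = h_{(s-2)_j}\ ||\ [\proposal{j,m,r}]$ for some node $j$, so $h_{(s-2)_j}$ is a strict prefix of $h_{s_i}$.

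With the one-round case in hand, the general statement follows by a straightforward induction on the number of rounds between $s'$ and $s$. Suppose the claim holds for some step $s - 2k > s'$, so there is a node $j$ with $h_{(s-2k)_j}$ a strict prefix of $h_{s_i}$. Applying the base case to $h_{(s-2k)_j}$ produces a node $j'$ whose initial history $h_{(s-2(k+1))_{j'}}$ is a strict prefix of $h_{(s-2k)_j}$. Since the prefix relation is transitive, $h_{(s-2(k+1))_{j'}}$ is a strict prefix of $h_{s_i}$, completing the induction down to any earlier round start $s' < s$.

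The main obstacle is simply being careful about \emph{whose} round an intermediate history belongs to: the $h''$ chosen from $R''$ is broadcast by a potentially different node $k$ than the node $j$ whose initial history it ultimately extends, and different nodes in the same round may pick different best histories from $B'$. The argument sidesteps this by only claiming existence of \emph{some} such node $j$ at each step, which is all the lemma requires; we never need to identify a canonical chain of nodes, only a chain of initial histories each extending the previous by one proposal. A minor side remark in the write-up should note that strict prefixing (rather than mere prefixing) follows from the fact that each round appends exactly one new $\proposal{j,m,r}$ entry in the $h' \leftarrow h\ ||\ [\proposal{i,m,r}]$ step, so every one-round hop grows the history by exactly one entry.
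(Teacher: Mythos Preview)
Your proof is correct and follows essentially the same approach as the paper's: trace the chosen history back through one round via the $R'' \to B'$ chain to find a node $j$ whose initial history is extended by exactly one proposal, then induct over rounds. Your version is in fact more explicit than the paper's about the intermediate node $k$ whose $h''$ was received in $R''$, whereas the paper compresses this by simply observing that all histories appearing in the round's sets are proposals $h'_{s_j}$ built on some node's initial history; the paper also explicitly invokes $t_b > 0$ and the receive threshold to justify nonemptiness of the relevant sets, which you leave implicit (reasonably, since the existence of $h_{s_i}$ already forces those choices to have been well-defined).
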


Consistency, \qsc's main safety property,
relies on the fact that each node $i$
delivers a resulting history in a consensus round
only when $i$ is sure that {\em all} nodes
will choose the same resulting history in that round.
The above lemma in turn guarantees that the histories
all nodes build on and potentially deliver in the future
must build on this common history.

\begin{restatable}{lem}{lemagreement}
Agreement on delivery:
If \qsc delivers history $h_{(s+2)_i}$ on node $i$
at the end of a consensus round starting at time-step $s$,
then the resulting history $h_{(s+2)_j}$ of every node $j$ in the same round
is identical to $h_{(s+2)_i}$.
\label{lem:agreement}
\end{restatable}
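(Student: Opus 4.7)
The plan is to unpack the two finality conditions that node $i$ has observed, then use the full-spread property of the \tsb primitive (Lemma on $B \subseteq R$ for $\tsb(t_r,t_b,n)$) to transport those conditions over to the perspective of an arbitrary other node $j$.

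First I would fix notation for the round: let $h := h_{(s+2)_i}$ denote the history $i$ delivers, with $R'_k, B'_k$ and $R''_k, B''_k$ denoting the sets returned at node $k$ from the first and second \broadcast calls of that round, respectively. By the delivery rule in Algorithm~\ref{alg:qsc}, $h \in B''_i$ and $h$ is uniquely best in $R'_i$. Similarly, on any node $j$ the resulting history $h_{(s+2)_j}$ is chosen as some best history in $R''_j$, and I need to show that $h$ is the unique such best element so that this choice is forced to equal $h$.

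Next I would establish two facts. (1) \emph{$h$ is in $R''_j$ for every non-failed $j$.} Since $h \in B''_i$ and $\broadcast$ implements $\tsb(t_r,t_b,n)$, the spread threshold $t_s = n$ guarantees that the second-round broadcast of $h$ was reliably delivered to every node that has not failed by the end of the step; hence $h \in R''_j$. (2) \emph{Every element of $R''_j$ lies in $R'_i$.} Any $h'' \in R''_j$ was broadcast in the second \broadcast call by some node $k$, and by the algorithm $h''$ was chosen by $k$ as a best element of $B'_k$. Applying the full-spread lemma to the first \broadcast call gives $B'_k \subseteq R'_i$, so $h'' \in R'_i$.

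Combining these with the unique-best finality condition completes the argument: any $h'' \in R''_j$ with $h'' \neq h$ lies in $R'_i$ by (2), and since $h$ is uniquely best in $R'_i$, such $h''$ must have priority strictly less than that of $h$. Together with (1), this shows that $h$ is in $R''_j$ and strictly dominates every other element of $R''_j$ in priority, so $h$ is uniquely best in $R''_j$ and node $j$ is forced to set $h_{(s+2)_j} = h$.

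I do not expect a serious obstacle: the proof is essentially a bookkeeping exercise in the \tsb guarantees. The one place requiring care is distinguishing ``best'' from ``uniquely best''—the first finality condition (reconfirmation via $B''_i$) is what makes $h$ visible to everyone, while the second (uniqueness in $R'_i$) is what rules out ties that would otherwise let a node $j$ legitimately pick a different best history from $R''_j$. Keeping both roles explicit, and invoking the full-spread lemma in both broadcast rounds, should yield a clean proof.
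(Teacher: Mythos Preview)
Your proposal is correct and follows essentially the same approach as the paper: use the full-spread property on the second \broadcast to place $h$ in every $R''_j$, use it on the first \broadcast to show every element of $R''_j$ (coming from some $B'_k$) lies in $R'_i$, and then invoke the uniquely-best condition on $R'_i$ to force $j$'s choice. Your version is arguably cleaner in that you state the containment $R''_j \subseteq R'_i$ directly and keep the subscript on $R'_i$ consistent with the finality condition throughout.
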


\com{	Redundant with the informal discussion above. -BF

Informally, a proposal $h$ is only delivered if two finality conditions are met:
{\em i)} $h \in B''$ and {\em ii)} $h$ is uniquely best in $R'$, that is, there is no other history 
in $R'$ with greater or equal priority. 
If $\qsc$ delivers $h$ for $i$, then the first condition ensures $i$'s delivered history
must be among those that any other node $j$ has to choose from as $h$ is included 
in every node $j$'s set $R''_j$.
The second finality condition ensures that no competing history exists that $j$ could choose
in preference to $i$'s delivered history as $h$ must be uniquely best in $R'$ and all nodes
are aware of all histories in $R'$. 
Note that neither of these conditions guarantee
that $j$ is symmetrically aware that agreement has occurred:
$j$ may in fact be unaware of this fact,
but will nevertheless ``obliviously'' pick and build on $i$'s delivered history,
learning only in some later round that this history prefix is final.
}

\subsubsection{Liveness}
\label{sec:qsc:liveness}

The other main prerequisite to \qsc's correctness
is ensuring its liveness, \ie, that it makes progress.
Unlike safety, liveness is probabilistic:
\qsc guarantees only that each node has a ``reasonable'' nonzero chance of 
delivering a committed history in each round.
This ensures in turn that for each node $i$, after any time-step $s$,
with probability 1 there exists some future time-step $s' \ge s$
at which $i$ delivers some (next) final history.

\qsc's liveness depends on the network scheduling message delivery
independently of the contents of proposals.
More precisely,
\qsc assumes that the network underlying \tsb primitive
chooses the sets $N_R$ and $N_B$,
determining which messages each node receives
and learns were reliably broadcast
(Definition~\ref{def:tsb}),
independently of the random priority values contained in the proposals.
As mentioned before, in practice, we can satisfy this assumption
either by assuming
a content-oblivious network scheduler~\cite{aspnes03randomized},
or by using private channels
(\eg, encrypted by TLS~\cite{rfc8446}).

\begin{restatable}{lem}{lemliveness}
If the network delivery schedule is independent of proposal priorities
and $p_t$ is the probability that two nodes tie for highest priority,
then each node delivers a history in each round independently
with probability at least ${t_b}/{n} - p_t$.
\label{lem:liveness}
\end{restatable}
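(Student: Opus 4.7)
The plan is to isolate a single favorable event $E_i$ under which node $i$ is guaranteed to deliver, and then bound $P[E_i] \ge t_b/n - p_t$ directly. Let $K$ denote the node whose priority is the globally largest in the current round (with ties broken, say, by node index), and let $T$ be the event that $K$'s priority is strictly greater than every other node's. Then $T^{c}$ is exactly the event that two nodes tie for the top, so $P[T] \ge 1 - p_t$. I would take
\[
E_i \;=\; T \;\cap\; \{K \in N_{B''_i}\},
\]
where $N_{B''_i}$ is the size-at-least-$t_b$ broadcast set underlying $i$'s second-round $B''_i$ set (cf.\ Definition~\ref{def:tsb}).

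The first step is to show that $E_i$ forces $i$ to deliver. On $E_i$, node $K$'s own proposal $h'_K$ lies in $B'_K$ (using that a node sees its own reliably-broadcast proposal, which the \tsb layer provides), and since $r_K$ is the strict global maximum, $K$ selects $h''_K = h'_K$ and re-broadcasts it in the second phase. The hypothesis $K \in N_{B''_i}$ then gives $h'_K \in B''_i$, establishing the first delivery condition. The full-spread cross-node lemma from Section~\ref{sec:tsb} gives $B''_i \subseteq R''_i$ and $B'_K \subseteq R'_i$, so $h'_K$ is also in $R''_i$ and in $R'_i$; strict global maximality of $r_K$ makes it the unique best in each, so $i$'s chosen $h_i$ must be $h'_K$, and it is uniquely best in $R'_i$, which is the second delivery condition.

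The second step is the probabilistic lower bound on $P[E_i]$. By the priority-independent-scheduling assumption the random set $N_{B''_i}$ is independent of the priority vector and hence of $K$. By exchangeability of the per-node priority draws, conditional on $T$ the identity of $K$ is uniform on $\{1,\dots,n\}$. A one-line conditional expectation then gives
\[
P[K \in N_{B''_i} \mid T] \;=\; \frac{\mathbb{E}[|N_{B''_i}|]}{n} \;\ge\; \frac{t_b}{n},
\]
so $P[E_i] \ge (1-p_t)(t_b/n) \ge t_b/n - p_t$ (using $t_b \le n$), and the lemma follows.

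The hardest part, I expect, is step one: carefully tracking how the two-phase broadcast funnels $K$'s proposal all the way into $B''_i$, $R''_i$, and $R'_i$ and ruling out that the tie-breaking in ``any best history'' could pick something other than $h'_K$. In particular, one must appeal to the cross-node inclusion $B \subseteq R$ across distinct callers of \broadcast, rather than only to the per-node $B''_i \subseteq R''_i$. A secondary subtlety in step two is justifying the uniformity of $K$ on $\{1,\dots,n\}$ purely from exchangeability, so that the argument does not implicitly require a continuous priority distribution.
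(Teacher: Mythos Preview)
Your step one has a genuine gap. The event $\{K \in N_{B''_i}\}$ says only that node $K$'s \emph{second-phase} broadcast $h''_K$ lands in $B''_i$; it does not say that $K$'s original proposal $h'_K$ does. For $h''_K = h'_K$ you need $h'_K \in B'_K$, and you claim \tsb provides this (``a node sees its own reliably-broadcast proposal''). It does not: the paper states explicitly in Section~\ref{sec:tsb} that ``\tsb makes no guarantee even that $i$'s own message $m$ is within the sets $R$ or $B$ returned to $i$.'' So on $E_i$ it is entirely possible that $h'_K \notin B'_K$, whence $K$ re-broadcasts some lower-priority $h''_K \ne h'_K$, and then $h'_K$ need not appear in $B''_i$ at all. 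Your event simply does not force delivery.

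The paper's argument avoids this by inserting an extra hop. It fixes (via the schedule, not via priorities) some node $j \in N_{B''_{s_i}}$ and considers the event $\{K \in N_{B'_{s_j}}\}$: that $K$'s \emph{first-phase} proposal $h'_K$ is in $j$'s first-phase broadcast set $B'_j$. Then $j$ (not $K$) is the one guaranteed to pick $h''_j = h'_K$, and since $j \in N_{B''_{s_i}}$ we get $h'_K \in B''_i$. Your uniformity/exchangeability calculation in step two then applies verbatim to $N_{B'_{s_j}}$ instead of $N_{B''_i}$, giving $P[K \in N_{B'_{s_j}}] \ge t_b/n$. In short, the favorable event must live at the first \broadcast\ call of some intermediary $j$, not at the second call of $K$ itself; once you make that correction, the rest of your outline goes through and matches the paper's proof.
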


\subsubsection{Overall correctness of \qsc}

The above lemmas in combination ensure
that \qsc correctly implements consensus.

\begin{restatable}{thm}{thmqsc}
\qsc implements multi-consensus on $n$ nodes (Definition~\ref{def:abroadcast})
atop a full-spread \tsb primitive $\tsb(t_r,t_b,n)$
where $t_r > 0$ and $t_b > 0$.
\label{thm:qsc}
\end{restatable}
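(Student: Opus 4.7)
The plan is to verify each of the three clauses of Definition~\ref{def:abroadcast} in turn, leaning on the three preceding lemmas rather than reasoning from scratch about the broadcast primitive.

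First I would dispense with \emph{validity}, which is essentially a bookkeeping observation. Inspecting Algorithm~\ref{alg:qsc}, the only histories that ever enter circulation are those formed by appending a fresh proposal $\proposal{i,m,r}$ with $m$ returned by \choosem\ in the current round. Moreover, every history $h$ that a node delivers must lie in the $B''$ set returned from that node's second \broadcast\ call, so its final entry was generated at most two time-steps before delivery. This gives the required constant $\delta \le 2$. This is the easiest clause and I would handle it in one short paragraph.

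For \emph{consistency}, the key is to combine Lemma~\ref{lem:agreement} (intra-round agreement at a delivering round) with Lemma~\ref{lem:preservation} (cross-round history preservation). Suppose node $i$ delivers $h$ at a round starting at step $s$, and node $j$ later delivers $h'$ at a round starting at step $s' \ge s$. By Lemma~\ref{lem:agreement}, at the end of round $s$ every non-failed node carries the exact history $h$ as its initial history for round $s+2$. For the later round at step $s'$, Lemma~\ref{lem:preservation} applied iteratively between $s+2$ and $s'$ yields some node whose initial history at each intermediate round is a prefix of $j$'s initial history at round $s'$; chaining these prefix relations back to step $s+2$ shows that $h$ must be a prefix of $j$'s initial history at step $s'$, and since $j$'s delivered $h'$ extends that initial history by exactly one proposal, $h$ is a prefix of $h'$. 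The case $s' = s$ is handled directly by Lemma~\ref{lem:agreement}. I expect this chaining argument to be the main obstacle, because Lemma~\ref{lem:preservation} is stated existentially (``there exists some node $j$''), so I will need to be careful to pick, at each intermediate round, the node whose initial history is the prefix witness for the next round, and argue that the resulting chain of prefixes terminates at the specific $j$ that performs the later delivery.

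Finally, for \emph{liveness}, Lemma~\ref{lem:liveness} gives each round an independent success probability at least $p := t_b/n - p_t$. With $t_b > 0$ and \randval\ drawing from a distribution with at least two values of positive probability, one can choose the distribution (or simply observe) that $p_t < t_b/n$, so $p > 0$. Since rounds are executed forever by the outer loop, the Borel--Cantelli argument (independent trials each with probability $\ge p$) shows that with probability $1$ some round after step $s$ delivers a history strictly longer than any previously delivered one, yielding the required $s' > s$. Combined with consistency, the delivered history then strictly extends the prior $h$. I would close by noting that the three clauses together instantiate Definition~\ref{def:abroadcast}, completing the proof.
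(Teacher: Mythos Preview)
Your proposal is correct and follows essentially the same route as the paper: validity via direct inspection with $\delta = 2$, consistency via Lemma~\ref{lem:agreement} as the base case combined with Lemma~\ref{lem:preservation} for the inductive step, and liveness via Lemma~\ref{lem:liveness} plus a Borel--Cantelli style argument. One simplification worth noting: your worry about chaining the existential witnesses in Lemma~\ref{lem:preservation} is unnecessary, because once Lemma~\ref{lem:agreement} establishes that \emph{every} node's initial history at step $s+2$ equals $h$, a single application of Lemma~\ref{lem:preservation} from $j$'s round $s'$ back to $s+2$ suffices---whatever node the existential picks there must have history exactly $h$.
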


Notice that \qsc's correctness theorem makes no direct assumptions
about the number of failing nodes,
in particular not mentioning the standard majority requirement $n > 2f$.
This is because the number of failing nodes affects only liveness,
and \qsc depends for liveness
on the undelying \tsb primitive's unconditional promise
to return from each \broadcast call in exactly one time-step.
It will prove impossible to implement a $\tsb(t_r,t_b,t_s)$ primitive
where $t_r > n-f$ or $t_b > n-f$,
because the primitive would have to collect messages from failed nodes
in order to return $R$ and $B$ sets of the promised size after each step.
But that is \tsb's problem, not \qsc's.

\com{
\xxx{
Randomness assumptions:
could in principle be any nontrivial distribution
on integer or real numbers,
provided all nodes choose priorities from the same distribution.
The distribution could even a single binary coin-flip (fair or unfair),
although this would be inefficient
as the consensus success probability would decrease exponentially in $n$.
For efficiency, we want the probability of choosing any particular priority
to be $O(1/n)$,
so that the probability of the highest-priority proposal tying with any other
is constant,
giving each consensus round a constant success probability.
}
}

\subsection{Asymptotic complexity of \qsc}
\label{sec:qsc:complexity}

\com{
\xxx{
Each node's history $h$ grows by one proposal
every two threshold broadcast time-steps.

$R''$ is a subset of $R'$.

Because $t_b > n/2$,
this implies that $t_b > 0$
and in turn that all sets $B$ and $R$
returned by calls to \broadcast are nonempty.

Any eligible history is included in the $R'$ returned to all nodes.

Any node's $R''$ is a subset of the eligible histories for that round.

Any history appearing in the $B''$ returned by any node
is universally known to be eligible at the end of the round.

Message complexity to agree on and deliver a next history:
$O(n)$ expected,
because each of the $n$ nodes transmits one message per round,
and provided $t_b/n$ is constant,
it takes $O(1)$ expected rounds between successful deliveries.
}}

Implementing \qsc na\"ively,
the histories broadcast in each round would grow linearly with time.
We can make \qsc efficient, however,
by adopting the standard practice of representing histories
as tamper-evident logs
or blockchains~\cite{schneier99secure,nakamoto08bitcoin,crosby09efficient}.
Each broadcast needs to contain only
the latest proposal or {\em head} of the history,
which refers to its predecessor
(and transitively to the entire history)
via a cryptographic hash.
Since \qsc does not need anything but the head in each round,
this is sufficient.
With this optimization,
the two messages each node broadcasts in each round are $O(1)$ size.
The total message and communication complexity of \qsc
is therefore $O(n^2)$ per round across the $n$ nodes,
assuming each broadcast requires $n$ unicast transmissions
(efficient broadcast would eliminate a factor of $n$).
Provided $t_b/n$ is constant,
it takes a constant expected number of rounds (namely $n/t_b$)
to commit and deliver a new consensus result,
so each consensus progress event likewise incurs
$O(n^2)$ expected communication complexity.
This analysis neglects the cost of implementing
the underlying \tsb abstraction that \qsc builds on, of course,
an issue we address later.

\com{PJ:  Below is an attempt to adapt the terminology from Cachin et
al.~\cite{cachin11introduction} (Section 6.1, p.281\,ff) to our use case. I've
combined the properties of ``no duplication'' and ``no creation'' into
``integrity'' similarly as in the ZooKeeper paper, see Section 2 of
\url{http://www.tcs.hut.fi/Studies/T-79.5001/reports/2012-deSouzaMedeiros.pdf}.

Notation: $[n] = \{1, \dots, n\}$ and $h \preceq h'$ denotes that history $h$ is
a prefix of history $h'$. 

\begin{itemize}

  \item \textbf{Validity:} If a correct process $i \in [n]$ invokes
    $\broadcast(h)$ for some history $h$ in step $s$, then $i$ eventually invokes
    $\deliver(h)$ in some step $s' > s$. 

  \item \textbf{Integrity:} For any history $h$, a process $i \in [n]$ invokes
    $\deliver(h)$ at most once, say in step $s_i$, and only if there exists a
    process $j \in [n]$ that previously invoked $\broadcast(h)$ in some step
    $s_j < s_i$.

  \item \textbf{Agreement:} If a correct process $i \in [n]$ invokes
    $\deliver(h)$ for some history $h$ in step $s_i$, then each correct process $j
    \in [n]$ invokes $\deliver(h)$ in step $s_j \geq s_i$.

  \item \textbf{Total order:} If correct processes $i, j \in [n]$ invoke
    $\deliver(h)$ and $\deliver(h')$ for histories $h$ and $h'$, then $h \preceq
    h'$ for $i$ if and only if $h \preceq h'$ for $j$.

\end{itemize}

Next attempt, this time using the consensus properties from Module 5.1
(p.205)~\cite{cachin11introduction}.

\begin{itemize}

  \item \textbf{Termination/Liveness} (``Every correct process eventually
    decides some value''): Assuming a correct process $i \in [n]$ has invoked
    $\deliver(h)$ for a history $h$ at step $s$, then $i$ invokes $\deliver(h')$
    for a history $h'$ in step $s'$, with $s < s'$, $\vert h \vert < \vert h'
    \vert$ and $h \preceq h'$.
    \xxx{ES: Seems to capture the same notion as liveness in Def.~\ref{def:abroadcast}. 
    Need to be careful about correct vs non-failing.}

  \item \textbf{Validity} (``If a process decides a value, then that value was
    proposed by some process before.''): If $i \in [n]$ invokes $\deliver(h)$ in
    step $s$, then there exists a step $s' < s$ where $h$ was proposed by a
    process $j \in [n]$ via $\broadcast(h)$.
    
    \xxx{ES: Current correctness in Def.~\ref{def:abroadcast}. The validity property here should probably use
    \choosem as it refers to getting a message $m$ from the application. Of course, given how Alg.~\ref{sec:qsc:alg}
    is written, each proposal is based on a call to \choosem but if we want to refer to ``recent proposals" (see below)
    then referring to \choosem makes more sense. 
    
    Validity does not capture the aspect of a recent proposal expressed in the correctness property.
    The ``recently" proposed value resembles liveness / progress of $\deliver$ so it doesn't keep returning the 
    same message over and over again. While termination/liveness ensures that histories are increasing, 
    it doesn't necessarily say they are increasing due to adding recent / fresh proposals. }
    
    \xxx{ES: The $\delta$ requirement only seems necessary if we explicitly define the processing of
    proposing a value as getting an output of \choosem.}
    
    \xxx{ES: The difference in referring to \choosem and \broadcast: If a node crashes after \choosem, it's never sent.}
    
  \item \textbf{Integrity} (``No process decides twice.''): A process $i \in
    [n]$ invokes $\deliver(h)$ for a history $h$ at most once for all steps $s$.
    
    \xxx{ES: This property is not explicitly included in the Def.~\ref{def:abroadcast}. Not sure if ``deciding' is equal to
    \deliver or all we want is that we never invoke \deliver with conflicting $h$ in the same step, 
    due to the differences in a process oriented as opposed to event oriented approach. }

  \item \textbf{Agreement} (``No two correct processes decide differently''): If
    a correct process $i \in [n]$ invokes $\deliver(h)$ for some history $h$ in
    step $s_i$, then each correct process $j \in [n]$ invokes $\deliver(h)$ in
    some step $s_j \geq s_i$.
    
    \xxx{ES: Current consistency. I think here we can only say that $j$ would invoke $\deliver(h')$ s.t. $h$ is a prefix of $h'$?}

\end{itemize}

}

\section{Threshold Logical Clocks}
\label{sec:tlc}

Since the \tsb abstraction seems somewhat tailor-made
for implementing consensus,
it would not be particularly useful
if it were almost as difficult to implement \tsb
as to implement consensus directly.
Fortunately, there are multiple clean and simple ways
to implement the \tsb primitive atop realistic, fully-asynchronous networks.

For this purpose we develop several variants of
a lower-level abstraction we call {\em threshold logical clocks} (\tlc).
The main purpose of \tlc is to provide the illusion of lock-step synchrony
that the \tsb abstraction presents and that \qsc relies on,
despite the underlying network being asynchronous.
Secondarily, a \tlc also conveniently provides the communication patterns needed
to implement the threshold reliability that the \tsb abstraction promises.


\com{
\tlc is inspired in part by Lamport clocks~\cite{lamport78time,raynal92about},
vector clocks~\cite{fischer82sacrificing,liskov86highly,mattern89virtual,fidge91logical,raynal92about},
and matrix clocks~\cite{wuu84efficient,sarin87discarding,ruget94cheaper,drummond03reducing,raynal92about}.
While these classic notions of virtual time label an unconstrained event history
to enable before/after comparisons, \tlc in contrast labels {\em and} constrains
events to ensure that a threshold of nodes in a group progress through logical
time in a quasi-synchronous lock-step fashion. In particular, a \tlc node
reaches time step $s+1$ only after a threshold of all participants has
reached time $s$ and a suitable threshold amount of round-trip communication
has demonstrably occurred since then. Put differently, \tlc instead constrains
nodes so that they must coordinate with a threshold of nodes in order to ``earn
the privilege'' of creating a new event and incrementing their notion of the
logical time.
}

The rest of this section is organized as follows: In Section~\ref{sec:tlcr} we
introduce \tlcr, a simple {\em receive-threshold logical clock} algorithm
realizing $\tsb(t_r, 0, 0)$. Afterwards, in Section~\ref{sec:tlcb}, we discuss
\tlcb a {\em broadcast-threshold logical clock} algorithm building on top of
\tlcr to provide full-spread broadcast communication $\tsb(t_r, t_b, n)$.
In Section~\ref{sec:tlcw} we then present \tlcw,
a {\em witnessed-threshold logical clock} algorithm
implementing $\tsb(t_r, t_b, t_s)$ communication,
amending some of the restrictions of \tlcb.
Finally, in Section~\ref{sec:tlcf},
we describe \tlcf,
which builds full-spread witness broadcast communication
$\tsb(t_r, t_b, n)$ on top of $\tlcr$ and $\tlcw$.
Proofs for theorems in this section
are in Appendix~\ref{sec:proofs:tlc}.

\com{
We start in Section~\ref{sec:tlcr} with \tlcr a basic \tlc algorithm
that provides a $\tsb(t_r,0,0)$ primitive,
which implements the lock-step synchrony illusion
and guarantees that each node receives a threshold number of messages at
each logical time-step,
but provides no broadcast reliability guarantees
and hence is by itself insufficient to implement \qsc.

\xxx{
The simplest, {\em unwitnessed} flavor of \tlc yields the simplest protocol,
and guarantees using simple counting arguments
a broadcast threshold $t_b > 0$,
provided that the total number of nodes $n$ is sufficiently greater
than the maximum number $f$ that might fail.
This flavor of \tlc achieves the consensus lower bound of $n > 2f$
only for the special case of $f=1$ and $n=3$, however,
requiring $n$ to grow faster than the lower bound if $f>1$.

The second, {\em witnessed} flavor of \tlc requires
slightly greater complexity and more communication per logical time-step,
but is still practical,
and cleanly supports the lower bound of $n > 2f$.
}

Finally, we present in Section~\ref{sec:tlcf} full-spread threshold synchronous broadcast
implementation $\tsb(t_r, t_b, n)$ on top of \tlcw .
}

\subsection{\tlcr: receive-threshold synchrony on asynchronous networks}
\label{sec:tlcr}

\begin{algorithm*}[t!]
\caption{$\tlcr(m)$, using a threshold logical clock
	to implement receive-threshold synchronous broadcast}
\SetCommentSty{textnormal}
\DontPrintSemicolon
\SetKwInOut{Config}{Configuration}
\SetKwInOut{State}{Persistent state}
\SetKwInOut{Input}{Function input}
\SetKwInOut{Output}{Function output}
\SetKwFor{Forever}{forever}{}{end}

\Config{node number $i$, number of nodes $n$,
	receive threshold $t_r \le n$}
\Config{functions \receive, \broadcast representing
	underlying asynchronous network API}
\State{receive message-set log $\vec{R}$,
	initialized to the singleton list $[\{\}]$}
\BlankLine
\Input{message $m$ to broadcast in this time-step}
\Output{sets $(R,B)$ of messages received in this time-step, and reliably broadcast (always empty)}
\BlankLine
$\vec{R} \leftarrow \vec{R}\ ||\ [\{\}]$
	\tcp*{start a new logical time-step with an empty receive message-set}
$\broadcast(\tlcmsg{i,m,|\vec{R}|,\vec{R}_{|\vec{R}|-1}})$
	\tcp*{broadcast our message, current time-step, and last message-set}
\While(\tcp*[f]{loop until we reach receive threshold $t_r$
		to advance logical time}){$|\vec{R}_{|\vec{R}|}| < t_r$}{
	$\tlcmsg{j,m',s',R'} \leftarrow \receive()$
		\tcp*{await and receive next message $m'$ from any node $j$}
	\uIf(\tcp*[f]{message $m'$ was sent in
			our current time-step}){$s' = |\vec{R}|$}{
		$\vec{R}_{|\vec{R}|} \leftarrow \vec{R}_{|\vec{R}|} \cup
				\{\tlcmsg{j,m'}\}$
			\tcp*{collect messages received in this time-step}
	}
	\ElseIf(\tcp*[f]{message $m'$ is from the next step due to
			in-order channels}){$s' > |\vec{R}|$}{
		$\vec{R}_{|\vec{R}|} \leftarrow \vec{R}_{|\vec{R}|} \cup R'$
			\tcp*{virally adopt message-set $R'$ that $j$ used
				to advance history}
	}
}
\KwRet{$(\{m'\ |\ \tlcmsg{j,m'} \in \vec{R}_{|\vec{R}|}\}, \{\})$}
	\tcp*{return the received message set and an empty broadcast set}
\label{alg:tlcr}
\end{algorithm*}

Algorithm~\ref{alg:tlcr} implements a $\tsb(t_r,0,0)$ abstraction
atop an asynchronous network,
ensuring that each node receives messages from at least $t_r$ nodes
during each logical time-step.
Although \tlcr tolerates messages being scheduled and delayed arbitrarily,
it makes the standard assumption that a message broadcast by any node
is {\em eventually} delivered to every other non-failing node.
For simplicity, \tlcr also assumes
messages are delivered {\em in-order} between any pair of nodes,
\eg, via any sequenced point-to-point transport such as TCP.

In \tlcr, each node broadcasts a message
at the beginning of each step $s$,
then waits to receive at least $t_r$ messages from step $s$.
\tlcr internally logs the receive-set it returns from each step in $\vec{R}$,
whose length tracks the current time-step.

Each node's broadcast in each step also includes the receive-set
with which it completed the previous step.
If a node receives any message from step $s+1$
before collecting a threshold of messages from $s$,
it immediately completes step $s$
using the previous receive-set it just obtained.
Because of the above assumption messages are pairwise-ordered (\eg, by TCP),
a node never receives a message for step $s+2$ or later
before receiving a message for step $s+1$ from the same node,
and thus never needs to ``catch up'' more than one step at a time.

As an alternative to including
the previous step's receive set in each broadcast,
\tlcr could simply defer the processing of messages for future steps
until the receive threshold is met for the current step.
This approach eliminates the pairwise ordered-delivery assumption,
at the potential cost of slightly slower progress in practice
because messages arriving early from future time steps
cannot "virally" help delayed nodes make progress.

\begin{restatable}{thm}{thmtlcr}
\tlcr (Algorithm~\ref{alg:tlcr})
implements a $\tsb(t_r,0,0)$ communication primitive
with receive threshold $0 \le t_r \le n$,
provided at most $f \le n - t_r$ nodes fail.
\label{thm:tlcr}
\end{restatable}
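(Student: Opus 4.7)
The plan is to verify each clause of Definition~\ref{def:tsb} for the parameter triple $(t_r, 0, 0)$. Two of the four clauses are immediate: \tlcr always returns $B = \{\}$, so the broadcast-threshold and spread-threshold clauses are satisfied vacuously by taking $N_B = \emptyset$. The substantive content is (i) lock-step synchrony, asserting that every non-failed caller returns within one logical step, and (ii) the receive-threshold clause, asserting that the returned $R$ equals $\{m_j : j \in N_R\}$ for some $N_R \subseteq \{1,\dots,n\}$ with $|N_R| \ge t_r$, where each $m_j$ is the message node $j$ broadcast in the current step.

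First I would establish a structural invariant by induction on the logical step $s$: when a non-failed node $i$ exits the inner loop at step $s$, the receive-set $\vec{R}_{s,i}$ consists solely of pairs $\tlcmsg{j,m'}$ such that $j$ actually broadcast $m'$ at its own step $s$, and $|\vec{R}_{s,i}| \ge t_r$. The ``same-step'' branch (messages with $s' = s$) trivially preserves the first part. The delicate case is the viral branch: when $i$ at step $s$ receives a message from $j$ tagged $s' > s$, it absorbs the attached $R' = \vec{R}_{s'-1}$ recorded by $j$, which is only a step-$s$ set when $s' = s+1$. I would prove this sub-invariant by exploiting pairwise in-order delivery: if $j$ ever sent a message for any step $s'' \ge s+2$, then $j$'s step-$(s+1)$ message was sent and delivered first; once $i$ processes that earlier message, it absorbs $j$'s $\vec{R}_s$, which inductively has size at least $t_r$, so the loop guard fails and $i$ exits step $s$ before ever seeing any message tagged $s'' \ge s+2$. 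Hence whenever the viral branch fires at step $s$ we have $s' = s+1$ and $R'$ is genuinely a step-$s$ receive set.

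Next, for lock-step synchrony and the lower bound $|N_R| \ge t_r$, I would argue inductively that every non-failed node eventually reaches step $s+1$ given that every non-failed node reaches step $s$. Since $f \le n - t_r$, at least $n-f \ge t_r$ non-failed nodes broadcast their step-$s$ message, and eventual delivery in the asynchronous network guarantees that every non-failed $i$ will, in finite time, either directly accumulate $t_r$ step-$s$ entries in $\vec{R}_{s,i}$, or else receive a step-$(s+1)$ message from some faster non-failed $j$. In the latter case the viral union contributes $j$'s step-$s$ receive set, which by the induction has at least $t_r$ entries, so $|\vec{R}_{s,i}|$ immediately exceeds the threshold and the loop terminates. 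Taking $N_R$ to be the set of sender indices appearing in $\vec{R}_{s,i}$ at exit then yields a set of size at least $t_r$, as required by Definition~\ref{def:tsb}.

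The main obstacle I anticipate is the viral-absorption argument: it is tempting to fear that $R'$ on an incoming future-step message could be a receive set from an arbitrarily advanced step and thereby contaminate $\vec{R}_{s,i}$ with messages from the wrong step. Pairwise in-order delivery is precisely what closes this gap, by forcing any such ``future'' message to be exactly one step ahead at the moment it triggers the viral branch. Once that invariant is pinned down, the remaining clauses of Definition~\ref{def:tsb} fall out immediately, and the failure bound $f \le n - t_r$ is exactly what is needed so that the $n-f$ live broadcasters suffice to meet the receive threshold at every step.
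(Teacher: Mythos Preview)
Your proposal is correct and follows essentially the same approach as the paper's proof: both dispatch the broadcast and spread clauses via the empty $B$, both use the bound $f \le n - t_r$ for termination, and both invoke pairwise in-order delivery to conclude that the viral branch only ever fires with $s' = s+1$. Your treatment is more rigorous than the paper's---in particular, you spell out why absorbing $j$'s step-$s$ receive set (already of size $\ge t_r$ by induction) forces the loop to exit before any step-$(s+2)$ message can be seen, whereas the paper simply asserts that pairwise ordering makes $s' > |\vec{R}|$ imply $s' = s+1$.
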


\subsubsection{Asymptotic complexity of \tlcr}
\label{sec:tlcr:complexity}

Since each node broadcasts exactly one message per time-step,
\tlcr incurs a total messsage complexity of $O(n^2)$ per round
across the $n$ nodes,
assuming each broadcast requires $n$ unicasts.

If the messages passed to \tlcr are constant size,
then \tlcr incurs a communication complexity of $O(n^3)$ per round
because of \tlcr's inclusion of the previous round's receive-set
in each broadcast.
Implementing \tlcr na\"ively,
if the messages passed to \tlcr are $O(n)$ size,
then total communication complexity is therefore $O(n^4)$ per round,
and so on.

A simple way to reduce this communication cost, however,
is simply to defer the processing of messages for future time steps
that arrive early,
as discussed above.
This way, broadcasts need not include the prior round's receive-set,
so communication complexity is only $O(n^2)$ per round
when application messages are constant size.

Another approach is to replace the application messages themselves
with constant-size references (\eg, cryptographic hashes)
to out-of-line blocks,
and to use a classic IHAVE/SENDME protocol as in USENET~\cite{rfc1036}
on the point-to-point links between nodes
to transmit only messages that the receiver has not yet obtained
from another source.
In brief, on each point-to-point message transmission
the sender first transmits the summary message containing only references;
the sender then waits for the receiver to indicate for which references 
the receiver does not yet have the corresponding content;
and finally the sender transmits only the content
of the requested references.
With this standard practice in gossip protocols,
each node typically receives each content block only once
(unless the node simultaneously downloads the same block from multiple sources
to minimize latency at the cost of bandwidth).

Implementing \tlcr in this way,
each round incurs a communication complexity of $O(n^3)$ per round
even if the messages passed to \tlcr are $O(n)$ size.
This is because each node proposes only one new message $m$ per round
and each node receives its content only once,
even if the prior round receive-set in each round's proposal
{\em refers to} $O(n)$ messages from the prior round
via constant-size references (\eg, hashes).

\subsection{\tlcb: broadcast-threshold synchrony atop \tlcr}
\label{sec:tlcb}

\begin{algorithm*}[t!]
\caption{$\tlcb(m)$, using two \tlcr steps to implement
	spread-threshold synchronous broadcast}
\SetCommentSty{textnormal}
\DontPrintSemicolon
\SetKwInOut{Config}{Configuration}
\SetKwInOut{State}{Persistent state}
\SetKwInOut{Input}{Function input}
\SetKwInOut{Output}{Function output}
\SetKwFor{Forever}{forever}{}{end}

\Config{node number $i$, number of nodes $n$, receive threshold $t_r \le n$,
	and spread threshold $t_s \le t_r$}
\Config{function \tlcr implementing $\tsb(t_r,0,0)$
	unreliable synchronous broadcast}
\BlankLine
\Input{message $m$ to broadcast in this $\tlcb$ step}
\Output{sets $(R,B)$ of messages received and reliably broadcast, respectively,
	in this $\tlcb$ step}
\BlankLine
$(R',\_) \leftarrow \tlcr(m)$
	\tcp*{broadcast our message unreliably in the first \tlcr step}
$(R'',\_) \leftarrow \tlcr(R')$
	\tcp*{rebroadcast all messages we received in the first \tlcr step}
$R \leftarrow \bigcup(\{R'\} \cup R'')$
	\tcp*{collect all messages we received directly or indirectly}
$B \leftarrow \{m'\ |$ at least $t_s$ message-sets in $R''$ contain $m'\}$
	\tcp*{messages we know were seen by at least $t_s$ nodes}
\KwRet{$(R,B)$}
\label{alg:tlcb}
\end{algorithm*}

Although \tlcr provides no broadcast threshold guarantees,
in suitable network configurations,
\tlcb (Algorithm~\ref{alg:tlcb}) does so by
simply using two successive \tlcr rounds
per (\tlcb) time-step.
In brief,
\tlcb uses its second \tlcr invocation
to broadcast and gather information about 
which messages sent in the first \tlcr invocation
were received by enough ($t_s$) nodes.
Simple ``pigeonhole principle'' counting arguments ensure that
enough ($t_b$) such first-round messages are so identified,
provided the configuration parameters
$n$, $t_r$, $t_b$, and $t_s$ satisfy certain constraints.
These constraints are specified in the following theorem,
whose detailed underlying reasoning may be found
in Appendix~\ref{sec:proofs:tlc}.

\com{
Let $f_r = n-t_r$ be the number of {\em received failures} allowed
by a \tsb primitive
(nodes' messages {\em not} received in a given step).
Similarly,
let $f_b = n-t_b$ be the number of {\em broadcast failures} allowed
(messages sent but {\em not} broadcast reliably to at least $t_s$ nodes),
and let $f_s = n-t_s$ be the number of {\em spread failures} allowed
(nodes failing to receive a reliably-broadcast message).
}

\begin{restatable}{thm}{thmtlcbpartial}
If 	$0 < t_r \le n-f$,
	$0 < t_s \le t_r$,
	$0 < t_b \le n-f_b$ where $f_b = t_r (n-t_r) / (t_r-t_s+1)$,
and at most $f$ nodes fail,
then \tlcb (Algorithm~\ref{alg:tlcb})
implements a $\tsb(t_r,t_b,t_s)$ partial-spread broadcast abstraction.
\label{thm:tlcb-partial}
\end{restatable}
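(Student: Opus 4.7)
The plan is to verify the four properties of Definition~\ref{def:tsb} in turn, leveraging the fact that \tlcb is a two-\tlcr composition and that Theorem~\ref{thm:tlcr} already guarantees each inner \tlcr call implements $\tsb(t_r,0,0)$. After relabeling two consecutive \tlcr time-steps as one \tlcb time-step, lock-step synchrony is inherited from \tlcr: each \tlcb call returns after exactly one \tlcb step, or not at all if the node fails in between. The receive threshold is almost as immediate, since the returned $R$ contains $R'$, and $R'$ already contains the first-round messages of at least $t_r$ distinct nodes by \tlcr's guarantee.

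For the spread threshold, fix any $m' \in B$ returned to some node $i$. By construction of $B$, at least $t_s$ of the sets $R'_j$ appearing in $i$'s second-round $R''$ contain $m'$. Each such $R'_j$ is exactly the set node $j$ returned from its first-round \tlcr call, so $j$ received $m'$ during this \tlcb step; because every node's final \tlcb receive set satisfies $R_j \supseteq R'_j$, each of these $t_s$ distinct nodes' $R$ sets contains $m'$. Nodes that fail after receiving $m'$ in round one but before completing round two are handled cleanly under the ``infinite-delay'' interpretation in the footnote following Definition~\ref{def:tsb}.

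The main obstacle is the broadcast threshold, which requires a pigeonhole-style double-counting argument. Let $T$ be the set of nodes whose $R'_j$ appears in $i$'s $R''$; by \tlcr's guarantee $|T| \ge t_r$ and each $|R'_j| \ge t_r$, so $\sum_{j \in T} |R'_j| \ge t_r \cdot |T|$. All contributions are first-round \tlcb messages, of which at most $n$ are distinct; any message outside $N_B$ appears in at most $t_s - 1$ of the $R'_j$ sets, while any message in $N_B$ appears in at most $|T|$ of them. Double-counting yields $|N_B| \cdot |T| + (n - |N_B|)(t_s - 1) \ge t_r \cdot |T|$, which rearranges to $|N_B| \ge (|T|\, t_r - n(t_s - 1))/(|T| - t_s + 1)$. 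A short derivative check shows this lower bound is non-decreasing in $|T|$ whenever $t_s \ge 1$ and $n \ge t_r$, so the worst case $|T| = t_r$ yields $n - |N_B| \le t_r(n - t_r)/(t_r - t_s + 1) = f_b$, \ie $|N_B| \ge t_b$, as required.

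Throughout, the hypothesis $t_r \le n - f$ ensures each inner \tlcr can complete despite up to $f$ failures, so both calls return on every non-failing node. The only genuinely delicate step is the monotonicity of the counting bound in $|T|$; everything else is routine bookkeeping on top of Theorem~\ref{thm:tlcr}.
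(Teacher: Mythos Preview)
Your proposal is correct and follows essentially the same counting strategy as the paper: both reduce the broadcast-threshold claim to a pigeonhole argument on the incidence matrix between the second-round receive-sets $R'_j$ and the $n$ first-round messages. The paper counts \emph{zeros} in a fixed $t_r \times n$ view matrix (each failing column needs $\ge t_r-t_s+1$ zeros, and there are at most $t_r(n-t_r)$ zeros total), whereas you count \emph{ones} via double-counting and then minimize over $|T| \ge t_r$; algebraically these are the same bound.

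Your treatment is in fact more complete than the paper's: you explicitly verify lock-step synchrony, the receive threshold, and the spread threshold, which the paper's proof omits, and your monotonicity check in $|T|$ cleanly handles the case $|R''| > t_r$ that the paper leaves implicit by silently restricting its matrix to exactly $t_r$ rows. Both arguments are valid, but yours is slightly more careful; the paper's is slightly more visual.
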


Suppose we desire a configuration tolerating up to $f$ node failures,
and we set $n = 3f$, $t_r = 2f$, $t_b = f$, and $t_s = f+1$.
Then $f_b = 2f(3f-2f)/(2f-(f+1)+1) = 2f$,
so $t_b \le n-f_b$ as required.
This \tlcb configuration therefore reliably delivers
at least $t_b = n/3$ nodes' messages
to at least $t_s = n/3+1$ nodes each in every step.

\subsubsection{Full-spread reliable broadcast using \tlcb}
\label{sec:tlcb:full}

If we configure \tlcb above
to satisfy the additional constraint that $t_r + t_s > n$, 
then it actually implements {\em full-spread reliable broadcast} or
$\tsb(t_r,t_b,n)$.
This constraint reduces to the classic {\em majority} rule, $n > 2f$,
in the case $t_r = t_s = f$ where at most $f$ nodes fail.

Under this constraint,
each of the (at least $t_b$) messages
in the set $B$ returned from \tlcb on any node
is guaranteed to appear in the set $R$
returned from the same \tlcb round on {\em every} node
that has not yet failed at that point.
Intuitively, this is because the first \tlcr call
propagates each message in $B$ to at least $t_s$ nodes,
every node collects $R'$ sets from at least $t_r$ nodes
during the second \tlcr call,
and since $t_r + t_s > n$ these spread and receive sets must overlap.

\begin{restatable}{thm}{thmtlcbfull}
If	$0 < t_r \le n-f$,
	$0 < t_s \le t_r$,
	$t_r + t_s > n$,
	$0 < t_b \le n-f_b$ where $f_b = t_r (n-t_r) / (t_r-t_s+1)$,
and at most $f$ nodes fail,
then \tlcb (Algorithm~\ref{alg:tlcb})
implements a $\tsb(t_r,t_b,n)$ full-spread broadcast abstraction.
\label{thm:tlcb-full}
\end{restatable}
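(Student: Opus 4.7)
The plan is to bootstrap Theorem~\ref{thm:tlcb-partial}, which already gives us $\tsb(t_r, t_b, t_s)$ with all of its broadcast-threshold and receive-threshold guarantees, and only strengthen the spread threshold from $t_s$ to $n$ using the extra hypothesis $t_r + t_s > n$. So the $t_b$ bound and the receive guarantees carry over verbatim; the only new content is showing that any message that ends up in some node's returned $B$ set actually reaches the $R$ set of every non-failed node during the same $\tlcb$ step.

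First I would fix an arbitrary node $i$ whose call to $\tlcb(m)$ returns $(R_i, B_i)$ with some $m' \in B_i$, and an arbitrary non-failed node $j$ also executing that $\tlcb$ step. By the construction of $B$ in Algorithm~\ref{alg:tlcb}, $m' \in B_i$ means that $m'$ appears in the $R'$ sets of at least $t_s$ nodes as reported in $i$'s second \tlcr invocation. Concretely, let $S$ be the set of nodes whose first-\tlcr receive sets $R'$ contained $m'$; by Theorem~\ref{thm:tlcb-partial}'s underlying spread argument for $\tsb(t_r, t_b, t_s)$, we have $|S| \ge t_s$. Each node in $S$ passed $m'$ as part of its input to the second \tlcr invocation.

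Second I would use the receive-threshold guarantee of the \emph{second} \tlcr call on node $j$: $j$'s $R''$ contains the second-step messages of at least $t_r$ distinct nodes; call this set $T$ with $|T| \ge t_r$. Then the standard pigeonhole inequality $|S| + |T| \ge t_s + t_r > n$ forces $S \cap T \ne \emptyset$, so there is some node $k$ whose $R'$ (included in $j$'s $R''$) contains $m'$. Since $R_j = \bigcup(\{R'_j\} \cup R''_j)$, we conclude $m' \in R_j$. As $j$ was an arbitrary non-failed node, $m'$ reaches every non-failed node's returned $R$, giving a spread threshold of $n$.

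I expect the main obstacle to be purely bookkeeping rather than conceptual: I need to be careful that the ``$|S| \ge t_s$'' claim is actually what Theorem~\ref{thm:tlcb-partial} delivers (i.e., that the partial-spread proof identifies $S$ with first-\tlcr receivers of $m'$, who are precisely the nodes whose second-\tlcr input $R'$ contains $m'$), and that the failure assumption $f \le n - t_r$ is consistent with the second \tlcr actually reaching its receive threshold on $j$ without requiring any failed node to respond. Both of these follow from reusing the hypotheses of Theorem~\ref{thm:tlcb-partial} unchanged, so once the pigeonhole step is in place the theorem follows immediately.
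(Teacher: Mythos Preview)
Your proposal is correct and follows essentially the same argument as the paper: invoke Theorem~\ref{thm:tlcb-partial} for the broadcast threshold, observe from the construction of $B$ that any $m' \in B_i$ lies in the first-round $R'$ sets of at least $t_s$ nodes, note that any non-failed node $j$'s second \tlcr call collects $R'$ sets from at least $t_r$ nodes, and apply the overlap $t_r + t_s > n$ to place $m'$ in $R_j$. The only minor remark is that the bound $|S| \ge t_s$ comes directly from the definition of $B$ in Algorithm~\ref{alg:tlcb} (which you do state first), not from the spread argument inside Theorem~\ref{thm:tlcb-partial}; but you flag this bookkeeping point yourself, and it does not affect the correctness of the plan.
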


Under these configuration constraints, therefore,
\tlcb provides a \tsb abstraction
sufficient to support \qsc (Section~\ref{sec:qsc}).
This consensus algorithm supports the optimal $2f+1$ node count
for the special case of $f=1$ and $n=3$,
which in practice is an extremely common and important configuration.
\com{PJ: Could we add some more context here, it's not obvious why 
\tlcb only works for this parameterization?
	tried to above. -BF}
For larger $f$, however, running \qsc on \tlcb
requires $n$ to grow faster than $2f+1$.
This limitation motivates {\em witnessed} \tlc, described next,
which is slightly more complex
but allows \qsc to support an optimal $n=2f+1$ configuration for any $f \ge 0$.

\com{
For appropriate sets of configuration parameters,
we can easily create a Threshold Reliable Broadcast (\tsb) primitive
using two unwitnessed \tlc time-steps.
All nodes use the first \tlc time-step to broadcast their respective messages.
They then use a second \tlc time-step to gossip
the message sets each node received in the first time-step.
The result is is a lock-step collective analog
to the classic echo broadcast algorithm~\cite{XXX}.

The main challenge each node $i$ faces is identifying a set $B$ of messages
that not only $i$ itself received,
but also that $i$ knows
that all {\em other} nodes must have received by the second step.
To achieve this, we require the message threshold $t_m$
to be a strict majority of the nodes, $t_m > n/2$.

A node $i$ includes a particular message $m$ in its set $B$
only if $i$ observes after the second \tlc time-step
that a majority of nodes received a particular message $m$
in the first time-step.
Node $i$ determines this simply by counting the number of message-sets
it collected from distinct nodes in the second step that contain $m$.
Because a majority of nodes received $m$ by the second step,
this majority must overlap with the majority of nodes
from which {\em any} node $j$ collected rebroadcast message-sets
in the second step,
which means that $j$ must also learn about $m$ by the second step
(unless it fails before then).

\begin{algorithm*}[t!]
\caption{$\tlcf(m)$, using \tlcr and \tlcb to provide
	$\tsb(t_r,t_b,n)$ full-spread reliable broadcast}
\SetCommentSty{textnormal}
\DontPrintSemicolon
\SetKwInOut{Config}{Configuration}
\SetKwInOut{State}{Persistent state}
\SetKwInOut{Input}{Function input}
\SetKwInOut{Output}{Function output}
\SetKwFor{Forever}{forever}{}{end}

\Config{node number $i$, number of nodes $n$, receive threshold $t_r \le n$,
	broadcast threshold $t_s \le n$, and
	spread threshold $t_s \le n$ such that $t_r + t_s > n$}
\Config{function \tlcr implementing $\tsb(t_r,0,0)$
	unreliable synchronous broadcast}
\Config{function \tlcb implementing $\tsb(t_r,t_b,t_s)$
	partial-spread synchronous broadcast}
\BlankLine
\Input{message $m$ to broadcast in this $\tlcf$ step}
\Output{sets $(R,B)$ of messages reliably broadcast and received, respectively,
	in this $\tlcf$ step}
\BlankLine
$(R',B') \leftarrow \tlcb(m)$
	\tcp*{broadcast our message with partial-spread reliability}
$(R'',B'') \leftarrow \tlcr(R')$
	\tcp*{rebroadcast all messages we received in this step}
\KwRet{$(\bigcup(\{R'\} \cup R''), B')$}
	\tcp*{messages received in either step,
		and reliably broadcast in \tlcb step}
\label{alg:tlcf}
\end{algorithm*}

\begin{theorem}
If	$0 < t_r \le n$,
	$0 < t_s \le n$,
	$0 < t_b \le n$,
and	$t_s + t_r > n$,
then \tlcf (Algorithm~\ref{alg:tlcf})
implements a $\tsb(t_r,t_b,n)$ full-spread broadcast abstraction.
\end{theorem}

\begin{proof}
The underlying \tlcb guarantees that its returned broadcast set $B'$
contains the messages sent by at least $t_b$ nodes.
Consider any such message $m$
and any node $i$ that completes this \tlcf step without failing.

Because \tlcb guarantees $\tsb(t_r,t_b,t_s)$ partial-spread broadcast,
the intermediate receive sets $R'$ of at least $t_s$ nodes
must contain message $m$.
That is, there is some set $N_s \subseteq \{1,\dots,n\}$ of nodes
such that $|N_s| \ge t_s$,
and for each node $j \in N_s$,
the intermediate receive set $R'$ returned by \tlcb on node $j$ contains $m$.

Further,
the message set $R''$ returned by \tlcr on node $i$
must contain the intermediate message sets $R'$
returned on at least $t_r$ nodes.
That is, there is some set $N_r \subseteq \{1,\dots,n\}$ of nodes
such that $|N_r| \ge t_r$,
and for each node $j \in N_r$,
the intermediate receive set $R'$ returned by \tlcb on node $j$
is a subset of $R''$ on node $i$.

Because $t_s + t_r > n$,
the sets $N_s$ and $N_r$ must therefore overlap by at least one node $k$.
Node $k$ therefore received message $m$ in its intermediate set $R'$,
and thus in turn must have passed $m$ on to $i$ via its subsequent \tlcr step.
Therefore, message $m$ must be in the receive set finally returned by \tlcf.
\end{proof}

\tlcf therefore implements a threshold synchronous broadcast primitive
sufficient for Que Sera Consensus to build on
(Section~\ref{sec:qsc}).

Note: optimal in the case $f = 1$ and $n = 3$.

}

\subsubsection{Asymptotic complexity of \tlcb}
\label{sec:tlcb:complexity}

Implementing \tlcb na\"ively on na\"ively-implemented \tlcr
yields a total communication complexity of $O(n^4)$ per round
if the messages passed to \tlcb are of size $O(1)$.

As discussed above in Section~\ref{sec:tlcr:complexity}, however,
this cost may be reduced by delaying the processing of messages
for future time steps,
or by using hash-references
and an IHAVE/SENDME protocol on the point-to-point links.
In this case, \tlcr incurs a communication complexity of $O(n^3)$ per round
with $O(1)$-size messages,
because the set $R'$ in the second broadcast
is not actually an $O(n)$-length list of $O(n)$-size messages,
but is rather an $O(n)$-size list of $O(1)$-size hash-references to messages
whose content each node receives only once.

\subsection{\tlcw: witnessed threshold logical clocks}
\label{sec:tlcw}

\begin{algorithm*}[t!]
\caption{$\tlcw(m)$, a witnessed threshold logical clock
	implementing $\tsb(t_b,t_b,t_s)$ synchronous broadcast}
\SetCommentSty{textnormal}
\DontPrintSemicolon
\SetKwInOut{Config}{Configuration}
\SetKwInOut{State}{Persistent state}
\SetKwInOut{Input}{Function input}
\SetKwInOut{Output}{Function output}
\SetKwFor{Forever}{forever}{}{end}

\Config{node number $i$, number of nodes $n$,
	broadcast threshold $t_b \le n$,
	spread threshold $t_s \le n$}
\Config{functions \receive, \broadcast, \unicast representing
	underlying asynchronous network API}
\State{message receive log $\vec{R}$ and broadcast log $\vec{B}$,
	each initialized to a singleton list $[\{\}]$}
\BlankLine
\Input{message $m$ to broadcast in this time step}
\Output{sets $(R,B)$ of messages received, and reliably broadcast,
	in this time-step}
\BlankLine
$(\vec{R},\vec{B}) \leftarrow (\vec{R}\ ||\ [\{\}], \vec{B}\ ||\ [\{\}])$
	\tcp*{start a new logical time-step with empty receive and broadcast sets}
$N_A \leftarrow \{\}$
	\tcp*{initially empty witness acknowledgment set for our message $m$}
$\broadcast(\tlcmsg{\tlcreq,i,m,|\vec{R}|,\vec{R}_{|\vec{R}|-1},\vec{B}_{|\vec{R}|-1}})$
	\tcp*{broadcast our request, current time-step, and last message-sets}
\While(\tcp*[f]{loop until we reach broadcast threshold $t_b$ to advance logical time}){$|\vec{B}_{|\vec{R}|}| < t_b$}{
	\Switch(\tcp*[f]{receive the next message from any node}){$\receive()$}{
		\uCase(\tcp*[f]{request message $m'$ from node $j$
				in the same time-step}){$\tlcmsg
					{\tlcreq,j,m',|\vec{R}|,\_,\_}$}{
			$\vec{R}_{|\vec{R}|} \leftarrow \vec{R}_{|\vec{R}|}
					\cup \{\tlcmsg{j,m'}\}$
				\tcp*{collect messages we received and witnessed in this time-step}
			$\unicast(j, \tlcmsg{\tlcack,i,m',|\vec{R}|,
						\vec{R}_{|\vec{R}|-1},
						\vec{B}_{|\vec{R}|-1}})$
				\tcp*{acknowledge node $j$'s request
					as a witness}
		}
		\uCase(\tcp*[f]{acknowledgment of our request $m$
				from node $j$}){$\tlcmsg
					{\tlcack,j,m,|\vec{R}|,\_,\_}$}{
			$N_A \leftarrow N_A \cup \{j\}$
				\tcp*{collect acknowledgments of our request message}
			\If(\tcp*[f]{our message has satisfied the spread
					threshold $t_s$}){$|N_A| = t_s$}{
				$\broadcast(\tlcmsg{\tlcwit,i,m,|\vec{R}|,
						\vec{R}_{|\vec{R}|-1},
						\vec{B}_{|\vec{R}|-1}})$
					\tcp*{announce our message $m$ as
						fully witnessed}
			}
		}
		\uCase(\tcp*[f]{announcement that $j$'s message $m'$
				was witnessed by $t_s$ nodes}){$\tlcmsg
					{\tlcwit,j,m',|\vec{R}|,\_,\_}$}{
			$\vec{B}_{|\vec{R}|} \leftarrow \vec{B}_{|\vec{R}|}
					\cup \{\tlcmsg{j,m'}\}$
				\tcp*{collect fully-witnessed messages received in this time-step}
		}
		\Case(\tcp*[f]{message $m'$ is from the next step due to
				in-order channels}){$\tlcmsg
					{\_,\_,\_,|\vec{R}|+1,R',B'}$}{
			$(\vec{R}_{|\vec{R}|},\vec{B}_{|\vec{R}|}) \leftarrow
				(\vec{R}_{|\vec{R}|} \cup R',
				\vec{B}_{|\vec{R}|} \cup B')$
				\tcp*{virally adopt message-sets
					that $j$ used to advance history}
		}
	}
}
\KwRet{$(\{m'\ |\ \tlcmsg{j,m'} \in \vec{R}_{|\vec{R}|}\},
	 \{m'\ |\ \tlcmsg{j,m'} \in \vec{B}_{|\vec{R}|}\})$}
	\tcp*{return the final message sets for this time-step}
\label{alg:tlcw}
\end{algorithm*}

\tlcw (Algorithm~\ref{alg:tlcw})
in essence extends \tlcr (Section~\ref{sec:tlcr})
so that each node $i$ works {\em proactively} in each round
to ensure that at least $t_b$ nodes' messages
are received by at least $t_s$ nodes each,
and waits until $i$ can confirm this fact
before advancing to the next logical time-step.

\tlcw accomplishes this goal by having each node
run an echo broadcast protocol~\cite{cachin11introduction}
in parallel,
to confirm that its own message has been received by at least $t_s$ nodes,
before its message is considered {\em threshold witnessed}
and hence ``counts'' toward a goal of $t_b$ such messages.
Variants of this technique have been used in other recent consensus protocols
such as ByzCoin~\cite{kokoris16enhancing} and VABA~\cite{abraham19vaba}.
As in \tlcr,
a slow node can also catch up to another node at a later timestep
by reusing the set of threshold-witnessed messages
that the latter node already used to advance logical time.

\begin{restatable}{thm}{thmtlcw}
If	$0 < t_b \le n-f$,
	$0 < t_s \le n-f$,
and at most $f$ nodes fail,
then \tlcw (Algorithm~\ref{alg:tlcw})
  implements a $\tsb(t_b,t_b,t_s)$ partial-spread broadcast abstraction.
\label{thm:tlcw}
\end{restatable}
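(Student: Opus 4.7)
The plan is to verify the four defining properties of a $\tsb(t_b,t_b,t_s)$ primitive from Definition~\ref{def:tsb} in turn: lock-step synchrony, receive threshold, broadcast threshold, and spread threshold. Throughout, I would let $s$ denote the current time-step of a node $i$ executing \tlcw, represented internally by $|\vec{R}|$.

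First I would establish lock-step synchrony, i.e.\ that every non-failing node's \tlcw call terminates. The loop exits once $|\vec{B}_{|\vec{R}|}|\ge t_b$. Each of the at least $n-f$ non-failing nodes broadcasts a $\tlcreq$ in the step, and by eventual delivery each such $\tlcreq$ reaches every non-failing node, which responds with a $\tlcack$. Thus each non-failing sender eventually accumulates $|N_A|\ge n-f\ge t_s$ acknowledgments and broadcasts a $\tlcwit$. Because $n-f\ge t_b$, every non-failing receiver eventually accumulates at least $t_b$ $\tlcwit$ messages (either directly or through viral adoption from a step-$(s{+}1)$ message), so the loop terminates.

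Next I would tackle the broadcast and spread thresholds together, since they both trace back to $\tlcwit$ messages. The loop exit condition guarantees $|\vec{B}_{|\vec{R}|}|\ge t_b$. I would show by invariant maintenance that every entry $\tlcmsg{j,m'}\in \vec{B}_{|\vec{R}|}$ has the property that at least $t_s$ nodes already added $\tlcmsg{j,m'}$ to their own $\vec{R}_{|\vec{R}|}$ and returned an ack to $j$ during the same time-step. In the non-viral case, this is immediate: $\tlcmsg{j,m'}\in\vec{B}_{|\vec{R}|}$ because $i$ received $\tlcmsg{\tlcwit,j,m',\dots}$, which $j$ only sends after collecting $|N_A|=t_s$ acks, each from a distinct witnessing node. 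In the viral case, the entry was copied from some $B'$ piggy-backed on a step-$(s{+}1)$ message from a node $l$; but then $l$ itself must have satisfied the same condition on $\tlcmsg{j,m'}$ before advancing, so the invariant still holds. Since a distinct sender index $j$ corresponds to each witnessed message, we obtain messages from at least $t_b$ distinct nodes, each spread to at least $t_s$ nodes.

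For the receive threshold (with $t_r=t_b$), I would argue the set-wise inclusion $\vec{B}_{|\vec{R}|}\subseteq \vec{R}_{|\vec{R}|}$. This is the place where the in-order pairwise delivery assumption is essential. In the non-viral case, node $j$ broadcasts its $\tlcreq$ before its $\tlcwit$, so by in-order delivery any node $i$ that processes $\tlcmsg{\tlcwit,j,m',\dots}$ has already processed $\tlcmsg{\tlcreq,j,m',\dots}$ and therefore added $\tlcmsg{j,m'}$ to $\vec{R}_{|\vec{R}|}$. In the viral case, we adopt $R'$ alongside $B'$ from the same step-$(s{+}1)$ message, and the inductive invariant on $l$ gives $B'\subseteq R'$, so the inclusion is preserved. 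Combined with the exit condition this yields $|\vec{R}_{|\vec{R}|}|\ge t_b$ messages from $t_b$ distinct senders.

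The main obstacle, as I see it, is handling the viral catch-up cleanly: one has to be careful that the $R'$ and $B'$ a slow node imports from a faster peer really do satisfy the invariants needed for the threshold claims, rather than merely transporting apparent evidence. I would therefore fold the four property checks into a single inductive argument over logical time, whose invariant states that for every node and every completed time-step $s$, the $\vec{R}_s$ and $\vec{B}_s$ maintained by that node simultaneously satisfy the receive-threshold size bound, the broadcast-threshold size bound, the set inclusion $\vec{B}_s\subseteq \vec{R}_s$, and the spread property for every element of $\vec{B}_s$. Once this joint invariant is in place, the four \tsb conditions fall out directly from the loop's exit criterion.
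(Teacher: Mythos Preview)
Your proposal is correct and follows essentially the same approach as the paper: verify the four \tsb properties of Definition~\ref{def:tsb} in turn, using the loop-exit condition for the broadcast threshold, the acknowledgment count for spread, and the inclusion $\vec{B}_s\subseteq\vec{R}_s$ for the receive threshold. Your treatment is in fact more careful than the paper's, which simply asserts ``by construction'' and ``since $B\subseteq R$''; in particular, you make explicit the role of in-order delivery in establishing $B\subseteq R$ and you handle the viral catch-up branch via an inductive invariant, points the paper glosses over.
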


\subsection{\tlcf: full-spread threshold synchronous broadcast with \tlcw and \tlcr}
\label{sec:tlcf}

\begin{algorithm*}[t!]
\caption{$\tlcf(m)$, using \tlcw and \tlcr to implement
	full-spread threshold synchronous broadcast}
\SetCommentSty{textnormal}
\DontPrintSemicolon
\SetKwInOut{Config}{Configuration}
\SetKwInOut{State}{Persistent state}
\SetKwInOut{Input}{Function input}
\SetKwInOut{Output}{Function output}
\SetKwFor{Forever}{forever}{}{end}

\Config{node number $i$, number of nodes $n$,
	receive threshold $t_r \le n$,
	broadcast threshold $t_b \le n$,
	and spread threshold $t_s \le n$,
	such that $t_r+t_s > n$}
\Config{function \tlcr implementing $\tsb(t_r,0,0)$
	receive-threshold synchronous communication}
\Config{function \tlcw implementing $\tsb(t_b,t_b,t_s)$
	witnessed broadcast-threshold synchronous communication}
\BlankLine
\Input{message $m$ to broadcast in this $\tlcf$ step}
\Output{sets $(R,B)$ of messages reliably broadcast and received, respectively,
	in this $\tlcf$ step}
\BlankLine
$(R',B) \leftarrow \tlcw(m)$
	\tcp*{broadcast at least $t_b$ messages to at least $t_s$ nodes each}
$(R'',\_) \leftarrow \tlcr(R')$
	\tcp*{rebroadcast all messages we received in the first step}
$R \leftarrow \bigcup(\{R'\} \cup R'')$
	\tcp*{collect all messages we received directly or indirectly}
\KwRet{$(R,B)$}
\label{alg:tlcf}
\end{algorithm*}

While \tlcw directly implements
only partial-spread threshold synchronous broadcast,
similar to \tlcb above
we can ``bootstrap'' it to full-spread synchronous broadcast
in configurations satisfying $t_r + t_s > n$.
\tlcf, shown in Algorithm~\ref{alg:tlcf},
simply follows a \tlcw round with a \tlcr round.
By exactly the same logic as in \tlcb,
this ensures
that each message in the broadcast set $B$ returned from \tlcw
propagates to {\em every} node that has not failed
by the end of the subsequent \tlcr round,
because all the broadcast-spread sets in \tlcw
overlap with all the receive-sets in the subsequent \tlcr.

\begin{restatable}{thm}{thmtlcf}
If	$0 < t_r \le n-f$,
	$0 < t_b \le n-f$,
	$0 < t_s \le n-f$,
	$t_r + t_s > n$,
and at most $f$ nodes fail,
then \tlcf (Algorithm~\ref{alg:tlcf})
implements a $\tsb(t_r,t_b,n)$ full-spread broadcast abstraction.
\label{thm:tlcf}
\end{restatable}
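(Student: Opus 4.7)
The plan is to verify each of the four clauses of Definition~\ref{def:tsb} in turn, leveraging Theorem~\ref{thm:tlcr} for the inner \tlcr step and Theorem~\ref{thm:tlcw} for the inner \tlcw step. Lock-step synchrony is immediate: \tlcf sequences one \tlcw tick followed by one \tlcr tick, each of which blocks until its threshold is met and then advances logical time by one, so the composed \tlcf call returns after exactly one \tlcf tick on every non-failing node.

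For the receive threshold, I would observe that the inner \tlcr invocation on node $i$ guarantees, by Theorem~\ref{thm:tlcr}, that $R''$ contains the \tlcr-broadcasts (each of which is an entire $R'$-set) sent by some node set $N_R$ with $|N_R| \ge t_r$. Each such node $j \in N_R$ re-broadcasts its own \tlcw receive-set $R'_j$, which in particular contains $j$'s own input message $m_j$ to the enclosing \tlcf step. Hence $R \leftarrow \bigcup(\{R'\} \cup R'')$ contains $m_j$ for every $j \in N_R$, so $N_R$ witnesses the receive threshold. The broadcast threshold is inherited verbatim from Theorem~\ref{thm:tlcw}, since \tlcf returns $B$ unchanged from \tlcw, and $|B| \ge t_b$ is part of what \tlcw promises.

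The crux is the full-spread clause ($t_s = n$). Fix any node $i$ whose \tlcf call returns $(R_i, B_i)$ and any $m' \in B_i$. By the spread guarantee of \tlcw (Theorem~\ref{thm:tlcw}), there is a set $N_s$ with $|N_s| \ge t_s$ such that the \tlcw-returned receive-set $R'_k$ on every $k \in N_s$ contains $m'$. Now pick any node $j$ that has not failed before its \tlcf call returns; I must show $m' \in R_j$. Applying Theorem~\ref{thm:tlcr} to the inner \tlcr call on $j$ gives a set $N_r$ with $|N_r| \ge t_r$ such that $R''_j$ contains the \tlcr-broadcast of every $k \in N_r$, i.e., contains $R'_k$ as one of its elements. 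Since $|N_s| + |N_r| \ge t_s + t_r > n$, the sets $N_s$ and $N_r$ intersect; any $k$ in the intersection supplies $m' \in R'_k \subseteq \bigcup R''_j \subseteq R_j$, as required.

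The main obstacle I expect is not the pigeonhole step itself, which mirrors the argument already used in Theorem~\ref{thm:tlcb-full}, but rather careful bookkeeping around failures: one must check that $N_s$ and $N_r$ are well-defined as node subsets of $\{1,\dots,n\}$ even when up to $f$ nodes crash mid-round, and that the threshold constraints $t_r,t_s \le n-f$ imposed in the theorem statement are precisely what make the inner applications of Theorems~\ref{thm:tlcr} and~\ref{thm:tlcw} legal. Once these preconditions are discharged, no additional failure bound beyond the stated hypotheses is needed, and the four clauses of Definition~\ref{def:tsb} follow.
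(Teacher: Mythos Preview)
Your core argument---the pigeonhole overlap between $N_s$ and $N_r$ establishing full spread---is exactly what the paper invokes, since the paper's entire proof is the single sentence ``The proof is identical in essence to that of Theorem~\ref{thm:tlcb-full}.'' You are more thorough than the paper in explicitly discharging all four clauses of Definition~\ref{def:tsb}, and your treatment of lock-step synchrony, the broadcast threshold, and the spread clause is correct.

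One small gap worth flagging: in your receive-threshold paragraph you assert that each $R'_j$ ``in particular contains $j$'s own input message $m_j$,'' but the $\tsb(t_b,t_b,t_s)$ contract that \tlcw fulfills does not promise this; the paper explicitly remarks that \tsb ``makes no guarantee even that $i$'s own message $m$ is within the sets $R$ or $B$ returned to $i$.'' The paper's one-line proof sidesteps the issue by not treating the receive threshold separately at all (in \tlcb the clause is immediate because the first inner call is \tlcr with receive threshold $t_r$, but in \tlcf the first inner call is \tlcw with receive threshold only $t_b$). In the intended $n=2f{+}1$, $t_r=t_b=t_s=f{+}1$ configuration the point is moot, and the concrete \tlcw code does in fact put a node's own broadcast into its $R'$; but if you want your argument to go through for arbitrary $t_r$ purely from the \tsb abstraction, that step needs a different justification.
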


\section{On-demand client-driven implementation of \tlc and \qsc}
\label{sec:od}

In appropriate network configurations,
\qsc (Section~\ref{sec:qsc:alg})
may be implemented atop
either full-spread \tlcb (Section~\ref{sec:tlcb:full})
or \tlcf (Section~\ref{sec:tlcf}).
Supporting a fully-asynchronous underlying network,
these combinations progress and commit consensus decisions continuously
as quickly as network connectivity permits.
Using the optimizations described in
Sections~\ref{sec:qsc:complexity},
\ref{sec:tlcr:complexity},
and~\ref{sec:tlcb:complexity},	
these implementations incur expected total communication costs
of $O(n^3)$ bits per successful consensus decision and \qsc history delivery.

We would like to address two remaining efficiency challenges, however.
First, in many practical situations
we want consensus to happen not continuously but only {\em on demand},
leaving the network idle and consuming no bandwidth
when there is no work to be done (\ie, no transactions to commit).
Second, it would be nice if \qsc could achieve the optimal lower bound
of $O(n^2)$ communication complexity, at least in common-case scenarios.
\xxx{ fill in reference to lower bound }

\subsection{Consensus over key-value stores}

With certain caveats,
we can achieve both efficiency goals above by implementing \qsc and \tlc
in a client-driven architecture.
In this instantiation,
the $n$ stateful nodes representing the actual consensus group members
are merely passive servers that implement only
a locally-atomic {\em write-once} key-value store.

\begin{definition}
A write-once store serves \kvwrite and \kvread requests from clients.
A $\kvwrite(K, V)$ operation atomically writes value $V$
under key $K$ provided no value exists yet in the store under key $K$,
and otherwise does nothing.
A $\kvread(K) \rightarrow V$ operation 
returns the the value written under key $K$,
or empty if no value has been written yet under key $K$.
\end{definition}

In practice the $n$ servers can be any of innumerable distributed key/value stores
supporting locally-atomic writes~\cite{kafka,nats,rabbitmq,redis}. 
The $n$ servers might even be standard Unix/POSIX file systems,
mounted on clients via NFS~\cite{rfc7530} for example.\footnote{
A standard way to implement \kvwrite atomically on a POSIX file system
is first to write the contents of $V$ to a temporary file
(ensuring that a partially-written file never exists under name $K$),
attempt to hard-link the temporary file to a filename for the target name $K$
(the POSIX \texttt{link} operation fails if the target name already exists),
and finally unlink the temporary filename
(which deletes the file if the \texttt{link} operation failed).
}

\subsection{Representing the \qsc/\tlc state machine}

Each of the $n$ key/value stores
implicitly represents the current state
of that node's \qsc/\tlc state machine,
as simulated by the clients.
One transition in each server's state machine
is represented by exactly one atomic key/value \kvwrite.
All keys ever used on a node inhabit a well-defined total order
across both \tlc time-steps and state transitions within each step.
To track the $n$ servers' consensus states and drive them forward,
each client locally runs $n$ concurrent threads or processes,
each simulating the state machine of one of the servers.

To read past consensus history from each server $i$
and catch up to its current state,
a client's local state-machine simulation thread $i$
simply reads keys from $i$ in their well-defined sequence,
replaying the \qsc/\tlc state machine defined by their values at each transition,
until the client encounters the first key not yet written on the server.
Clients that are freshly started or long out-of-date
can catch up more efficiently
using optimizations discussed later in Section~\ref{sec:od:opt}.

To advance consensus state,
each client's $n$ simulation threads coordinate locally
to decide on and (attempt to) write new key-value pairs to the servers,
representing nondeterministic but valid state transitions on those servers.
Each of these writes may succeed or fail due to races
with other clients' write attempts.
In either case, the client advances its local simulation
of a given server's state machine
\emph{only} after a read to the appropriate key,
\ie, according to the state transition defined by whichever client
won the race to write that key.

We outline only the general technique here.
\qscod, Algorithm~\ref{alg:qscod} in Appendix~\ref{sec:odb},
presents pseudocode for
a specific example of a client-driven on-demand implementation
of \qsc over \tlcb.

\subsection{Complexity analysis}

Implementing \qsc over \tlcb in this way in \qscod,
a client that is already caught up to the servers' states
incurs $O(n^2)$ expected communication bits
to propose and reach agreement on a transaction.
This is because the client reads and writes only 
a constant number of $O(n)$-size messages to the $O(n)$ servers
per consensus round,
and \qsc requires a constant expected number of rounds
to reach agreement.

The client eliminates the need for broadcasts
by effectively serving in a ``natural leader'' role,
analogous to an elected leader in Paxos --
but without Paxos's practical risk
of multiple leaders interfering with each other
to halt progress entirely.
When multiple \qscod clients ``race''
to drive the servers' implicit \qsc/\tlc state machines
concurrently in one consensus round,
the round still progresses normally and completes
with the same (constant) success probability
as with only one client.

Under such a time of contention,
only one's client's proposal can ``win'' and be committed in the round,
of course.
Since in transactional applications
clients whose proposals did not win may need to retry,
this contention can increase communication costs and server load,
even though the system is making progress.
Since each consensus round is essentially a shared-access medium,
one simple way to mitigate the costs of contention
is using random exponential backoff,
as in the classic CSMA/CD algorithm for coaxial Ethernet~\cite{ieee802-3}.
Another approach is for clients to submit transactions
to a gossip network that includes a set of intermediating back-end proxies,
each of which collect many clients' transactions into blocks
to propose and commit in batches, as in Bitcoin~\cite{nakamoto08bitcoin}.
This way, it does not matter to clients which proxy's proposal
wins a given round
provided \emph{some} proxy includes the client's transaction in a block.

\subsection{Implementation optimizations}
\label{sec:od:opt}

\xxx{ this could easily be moved to the appendix too if needed. }

The above complexity analysis assumes that
a client is already ``caught up'' to the servers' recent state.
Implementations can enable 
a freshly-started or out-of-date client can catch up efficiently,
with effort and communication logarithmic rather than linear
in the history size,
by writing summaries or ``forward pointers'' to the key-value stores
enabling future clients to skip forward over exponentially-increasing distances,
as in skip lists~\cite{pugh90skip} or skipchains~\cite{nikitin17chainiac}.

\xxx{ practical: reduce round-trip latencies by write batching.
A batching protocol design well-suited to this use-case
is for clients to submit a sequence of key-value pairs at once,
each which the server attempts to write in-order,
stopping at the first write that fails
(typically because another client already wrote a value at that key).
The server then reads all the keys specified in the batch in-order,
again stopping at the first read that fails
(typically because no value has yet been written to that key),
and returns the actual values for all successfully-read keys.
}

\com{	Byzantine probably needs to wait for the next paper
\input{byz/model}

}

\section{Limitations and Future Work}
\label{sec:lim}

The \qsc and \tlc protocols developed here
have many limitations,
most notably tolerating only crash-stop node
failures~\cite{schlichting83fail,cachin11introduction}.
It appears readily feasible to extend \qsc and \tlc
to tolerate Byzantine node failures
along lines already proposed informally~\cite{ford19threshold}.
Further, it seems promising to generalize the principles of \qsc and \tlc
to support quorum systems~\cite{malkhi98byzantine,hirt00player,lokhava19fast,cachin19asymmetric},
whose threat models assume that not just a simple threshold of nodes,
but more complex subsets,
might fail or be compromised.
Full formal development and analysis of Byzantine versions of \qsc and \tlc,
however,
remains for future work.

Most efficient asynchronous Byzantine consensus protocols
rely on threshold secret sharing
schemes~\cite{shamir79share,stadler96publicly,schoenmakers99simple}
to provide shared
randomness~\cite{canetti93fast,cachin05random,syta17scalable}
and/or efficient threshold
signing~\cite{syta16keeping,abraham19vaba}.
Setting up these schemes asynchronously without a trusted dealer, however,
requires distributed key generation or
DKG~\cite{cachin02asynchronous,zhou05apss,kate09distributed,kokoris19bootstrapping}.
The \tlc framework appears applicable to efficient DKG
as well~\cite{ford19threshold},
but detailed development and analysis of this application of \tlc
is again left for future work.

While this paper focuses on implementing consensus
in a fashion functionally-equivalent to [Multi-]Paxos or Raft,
it remains to be determined how best to implement
closely-related primitives such as
atomic broadcast~\abcite{cachin11introduction}{cristian95atomic,defago04total,marandi10ring}
in the \tlc framework.
For example,
\qsc as formulated here guarantees only that each round has a reasonable chance
of committing {\em some} node's proposal in that round --
but does not guarantee that any {\em particular} node's proposals
have a ``fair'' chance, or even even are {\em ever},
included in the final total order.
Indeed, a node that is consistently much slower than the others
will never see its proposals chosen for commitment.
An atomic broadcast protocol, in contrast, should guarantee
that {\em all} messages submitted by {\em any} correct node
are {\em eventually} included in the final total order.
The ``fairness'' or ``eventual-inclusion'' guarantees
required for atomic broadcast
are also closely-related to properties like {\em chain quality}
recently explored in the context of blockchains~\abcite{pass17hybrid,pass17fruitchains}{bentov16snow}.

While the algorithms described above
and their fundamental complexity-theoretic characteristics
suggest that \qsc and \tlc should yield 
simple and efficient protocols in practice,
these properties remain to be confirmed empirically
with fully-functional prototypes and rigorous experimental evaluation.
In particular,
we would like to see systematic user studies of the
difficulty of implementing \qsc/\tlc in comparison with 
traditional alternatives,
similar to the studies that have been done
on Raft~\abcite{ongaro14search}{howard15raft}.
In addition, while we have decades of experience optimizing
implementations of Paxos for maximum performance and efficiency
in deployment environments,
it will take time and experimentation to determine
how these lessons do or don't translate, or must be adapted,
to apply to practical implementations of \qsc/\tlc.

\section{Related Work}
\label{sec:rel}

This section summarizes related work,
focusing first on \tlc in relation to classic logical clocks, and 
then on \qsc in relation to other consensus protocols, first asynchronous and then those
specifically designed with simplicity in mind. 

\paragraph{Logical clocks and virtual time}

\tlc is inspired by
classic notions of logical time,
such as Lamport clocks~\cite{lamport78time,raynal92about},
vector clocks~\abcite{fidge88timestamps}{fischer82sacrificing,liskov86highly,mattern89virtual,fidge91logical} 
and matrix clocks~\abcite{wuu84efficient,drummond03reducing}{sarin87discarding,ruget94cheaper,raynal92about}.
Prior work has used logical clocks and virtual time for purposes
such as discrete event simulation and rollback~\cite{jefferson85virtual},
verifying cache coherence protocols~\cite{plakal98lamport},
and temporal proofs for digital ledgers~\cite{hughes17radix}.
We are not aware of prior work defining a threshold logical clock abstraction
or using it to build asynchronous consensus, however.


Conceptually analogous to \tlc,
Awerbuch's synchronizers~\cite{awerbuch85complexity}
are intended to simplify the design of distributed algorithms
by presenting a synchronous abstraction atop an asynchronous network.
Awerbuch's synchronizers assume a fully-reliable system, however,
tolerating no failures in participating nodes.
\tlc's purpose might therefore be reasonably described as building
{\em fault-tolerant synchronizers}.

The basic threshold communication patterns \tlc employs
have appeared in numerous protocols in various forms,
such as classic reliable
broadcast~\abcite{bracha84asynchronous}{bracha85asynchronous,reiter94secure}.
Witnessed \tlc is inspired by
threshold signature schemes~\abcite{shoup00practical}{boldyreva03threshold},
signed echo broadcast~\cite{reiter94secure,cachin01secure,abraham19vaba},
and witness cosigning protocols~\cite{syta16keeping,nikitin17chainiac}.
We are not aware of prior work to develop or use a form of logical clock
based on these threshold primitives, however,
or to use them for purposes such as asynchronous consensus.

\com{	vague and wordy, not sure what this adds... -BF
Lastly, \tlc provides for a separation of handling of time from a process, \eg, consensus, 
built atop. Specifically, it allows upper layers to take advantage of synchrony assumptions, such as 
requiring lower corruptions thresholds or using simpler designs, while 
operating over an asynchronous network. 
\com{
In a similar quest to decompose assumptions from designs, 
other works~\cite{gafni98round,loss18combining,malkhi19flexible,blum19synchronous} 
circumvented the need to make an uniform assumption for all layers of the system 
by designing hybrid protocols that can retain {\em different} notions of safety under different assumptions.
}
}

\paragraph{Asynchronous consensus protocols}

The FLP theorem~\cite{fischer85impossibility}
implies that consensus protocols
must sacrifice one of safety, liveness, asynchrony, or determinism.
\com{
Paxos~\cite{lamport98parttime,lamport01paxos}
and its leader-based derivatives for
the fail-stop model~\cite{boichat03deconstructing,ongaro14search,renesse15paxos,howard15raft}
sacrifice asynchrony by relying on timeouts to ensure progress,
leaving them vulnerable to performance and DoS
attacks~\cite{clement09making,amir11byzantine}.
}
\qsc sacrifices determinism
and implements a probabilistic approach to consensus.
Randomness has been used in consensus protocols in various ways:
Some use private coins that nodes flip independently
but require time exponential in group
size~\abcite{bracha84asynchronous}{ben-or83another,moniz06randomized},
assume that the network embodies randomness
in the form of a {\em fair scheduler}~\cite{bracha85asynchronous},
or rely on shared coins~\abcite{cachin05random,miller16honey,abraham19vaba,syta17scalable}{rabin83randomized,ben-or85fast,canetti93fast,cachin01secure,cachin02secure,friedman05simple,correia06consensus,correia11byzantine,mostefaoui14signature,duan18beat}.
Shared coins require complex setup protocols, however,
a problem as hard as asynchronous consensus
itself~\cite{cachin02asynchronous,zhou05apss,kate09distributed,kokoris19bootstrapping}.
\com{	some of these are probably only applicable to Byzantine protocols...
which, however, often 
assume a trusted dealer~\cite{cachin02asynchronous} 
a partially-synchronous network~\cite{kate12distributed} 
a weakened fault tolerance threshold~\cite{canetti98fast} 
or weakened termination guarantees~\cite{bangalore18almost}. 
}
\qsc in contrast requires
only private randomness and private communication channels.

\qsc's consensus approach,
where each node maintains its own history
but adopts those of others so as to converge statistically,
is partly inspired by randomized blockchain consensus
protocols~\abcite{nakamoto08bitcoin,gilad17algorand}{kiayias16ouroboros,abraham18dfinity},
which rely on synchrony assumptions however.
\com{These blockchain protocols rely on synchrony assumptions, however,
such as the essential block interval parameter that paces
Bitcoin's proof-of-work~\cite{gervais16security}.}
\qsc in a sense provides Bitcoin-like consensus
using \tlc for fully-asynchronous pacing and
replacing Bitcoin's ``longest chain'' rule with a ``highest priority'' rule.
\com{\qsc builds on the classic techniques of
tamper-evident logging~\cite{schneier99secure,crosby09efficient},
timeline entanglement~\cite{maniatis02secure},
and accountable state
machines~\cite{haeberlen07peerreview,haeberlen10accountable}
for general protection against Byzantine node behavior.
Several recent DAG-based blockchain consensus
protocols~\cite{lewenberg15inclusive,baird16hashgraph,popov18tangle,danezis18blockmania}
reinvent specialized variants of these techniques.
}

\com{
\xxx{PJ: we should probably discuss~\cite{moniz06randomized}}
BF: worked it into the citations.  It's in the line of complex stack appraoches
like cachin05 and miller16 that uses binary consensus,
in this case relying on the exponential-time randomized approach
with private coins.
I don't see a need for discussion specifically of this paper. -BF
}

\paragraph{Consensus protocols designed for simplicity}

Consensus protocols, such as the classic
(Multi-)Paxos~\cite{lamport98parttime}, are notoriously difficult
to understand, implement, and reason about.
This holds expecially for those
variants that run atop asynchronous networks, can handle Byzantine
faults, or try to tackle
both~\cite{castro99practical,cachin02secure,cachin05random,moniz06randomized,miller16honey}.

(Multi-)Paxos, despite being commonly taught and used in real-world deployments, required a number of additional 
attempts to clarify its design and 
further modifications to adapt it for practical
applications~\abcite{lamport01paxos,lampson01abcd,van15paxos}{mazieres07paxos,chandra07paxos,kirsch08paxos,cui15paxos}.
The intermingling of agreement and network synchronization appears to be a source of
algorithmic complexity that has not been addressed adequately
in past generations of consensus protocols,
resulting in complex leader-election and view-change (sub-)protocols
and restrictions to partial
synchrony~\abcite{ongaro14search}{howard15raft}
\com{\qsc's design is most closely related to Multi-Paxos~\cite{chand16formal}, where nodes 
agree on a series of values by using a single run of
Paxos~\cite{lamport98parttime} for each value.
Multi-Paxos, however, requires a leader to improve performance, allowing to skip {\em prepare}
phases so long the leader is alive.
(Multi-)Paxos, despite being commonly taught and used in real-world deployments, 
is difficult to understand, required a number of additional 
attempts to clarify its
design~\cite{lampson96build,lamport01paxos,lampson01abcd,garcia18paxos}, and 
further modifications to adapt it for practical
applications~\cite{mazieres07paxos,chandra07paxos,kirsch08paxos,cui15paxos,van15paxos}.
Lastly, there is no single, agreed-upon specification of Multi-Paxos and many deployments use custom
designs and implementations~\cite{burrows06chubby,chandra07paxos,hunt10zookeeper,junqueira11zab}.}

In its aim for simplicity and understandability, \qsc is closely related to
Raft~\cite{ongaro14search},
which however assumes a partially-synchronous network and relies on a leader.
\com{Both Multi-Paxos and Raft assume a partially synchronous network.}
\qsc appears to be the first practical
yet conceptually simple asynchronous consensus protocol
that depends on neither leaders nor common coins,
making it more robust to slow leaders or network denial-of-service attacks. 
The presented approach is relatively clean and simple in part
due to the decomposition of the agreement problem (via \qsc) 
from that of network asynchrony (via \tlc).


\com{
\xxx{ES: needs  to be added somewhere... "Ouroboros Chronos: Permissionless Clock Synchronization via Proof-of-Stake”
\url{https://eprint.iacr.org/2019/838.pdf}}

Ouroboros~\cite{kiayias16ouroboros},
Ouroboros Chronos~\cite{badertscher19ouroboros}
}

\com{
This ``propose, gossip, decide'' approach relates to
recent DAG-based blockchain consensus
proposals~\cite{lewenberg15inclusive,baird16hashgraph,popov18tangle,danezis18blockmania},
which reinvent and apply classic principles
of secure timeline entanglement~\cite{maniatis02secure}
and accountable state machines~\cite{haeberlen07peerreview,haeberlen10accountable}.
The approach to consensus we propose attempts to clarify and systematize
this direction in light of existing tools and abstractions.
}

\section{Conclusion}
\label{sec:concl}

This paper has presented \qsc, the first asynchronous consensus protocol
arguably simpler than current partially-synchronous workhorses
like Paxos and Raft.
\qsc requires neither leader election, view changes, nor common coins,
and cleanly decomposes the consensus problem itself
from that of handling network asynchrony.
With appropriate implementation optimizations,
\qsc completes in $O(1)$ expected rounds per agreement,
incurring $O(n^3)$ communication bits in a broadcast-based group,
or $O(n^2)$ bits per client-driven transaction
in an on-demand implementation approach.

\subsection*{Acknowledgments}

This work benefitted
from discussion with numerous colleagues:
in particular 
Eleftherios Kokoris-Kogias,
Enis Ceyhun Alp,
Manuel Jos\'{e} Ribeiro Vidigueira,
Nicolas Gailly,
Cristina Basescu,
Timo Hanke,
Mahnush Movahedi,
and Dominic Williams.

This ongoing research was facilitated in part by financial support from
DFINITY, AXA, Handshake, and EPFL.
DFINITY's support in paticular,
which funded a joint project to analyze, improve, and formalize
its consensus protocol,
provided a key early impetus to explore randomized consensus protocols further.


\bibliographystyle{plain}
\arxiv{
\bibliography{os,net,sec,theory,lang}
}{
\bibliography{main}
}

\section*{Appendix}
\appendix
\section{Correctness Proofs}
\label{sec:proofs}

This appendix contains the proofs for the theorems in the main paper.

\subsection{Que Sera Consensus (\qsc)}
\label{sec:proofs:qsc}

This section contains correctness proofs
for the \qsc consensus algorithm (Section~\ref{sec:qsc}).

\lempreservation*


\begin{proof}
Let $h_{s_i}$ be the initial history of the round
starting at step $s$ on node $i$,
let $h'_{s_i}$ and $h''_{s_i}$ be $i$'s proposed and intermediate histories
in that round, respectively, and
let $(B'_{s_i},R'_{s_i})$ and $(B''_{s_i},R''_{s_i})$ be the sets
returned by the round's two \broadcast calls.
By \qsc's requirement that $t_b > 0$
and the \tsb's 
broadcast threshold property (Section~\ref{sec:tsb}),
the sets $B'_{s_i}$ and $B''_{s_i}$
returned by the round's two \broadcast calls are each nonempty.
By \tsb's 
receive threshold property,
the returned sets $R'_{s_i}$ and $R''_{s_i}$ are nonempty as well.
By message propagation through these \broadcast calls,
these sets consist solely of histories $h'_{sj}$
each proposed in the same round by some node $j$,
and each of which builds on $j$'s initial history $h_{s_j}$.
By induction over consensus rounds, therefore,
at step $s$ each node $i$'s history $h_{s_i}$
builds on some node $j$'s history $h_{s'j}$ at each earlier step $s' < s$.
That is, $h_{s'j}$ is a strict prefix of $h_{s_i}$.
\end{proof}

\lemagreement*
\begin{proof}
Agreement can be violated only if some node $j$
arrives at a different resulting history $h_{(s+2)_j} \ne h_{(s+2)_i}$.

Because $h_{(s+2)_i} \in B''_{s_i}$
and $B''_{s_i} \subseteq R''_{s_j}$ by \tsb's 
broadcast spread property,
$i$'s delivered history $h_{(s+2)_i}$ is also among the set of histories
from which $j$ chooses its resulting (but not necessarily delivered)
history $h_{(s+2)_j}$.
Because $j$ chooses some best history from set $R''_{s_j}$,
$h_{(s+2)_j}$ cannot have strictly lower priority than $h_{(s+2)_i}$,
otherwise $j$ would instead choose $h_{(s+2)_i}$.
So we can subsequently assume that the priority of $h_{(s+2)_j}$
is greater than or equal to that of $h_{(s+2)_i}$.

Every history occurring in $j$'s set $R''_{s_j}$, however,
is a proposal derived (via the round's second \broadcast call)
from a member of some set $B'_{s_k}$ 
that the first \broadcast call returned to some node $k$.
Because $B'_{s_k} \subseteq R'_{s_j}$ by \tsb's 
broadcast spread property,
both $h_{(s+2)_i}$ and $h_{(s+2)_j}$ must therefore also appear in $R'_{s_j}$.
But then $h_{(s+2)_i}$ cannot be uniquely best in $R'_{s_j}$,
satisfying the second condition on $i$ delivering $h_{(s+2)_i}$,
unless $h_{(s+2)_j} = h_{(s+2)_i}$.
\end{proof}

\lemliveness*
\begin{proof}
We will show that in the absence of a tie for best priority,
node $i$'s probability of successfully finalizing a round is ${t_b}/n$.
Since a round without a tie thus fails with probability at most $1-{t_b}/n$,
by the Union Bound,
the overall probability of round failure is at most $1-{t_b}/n+p_t$.

Let $N_{B'_{s_i}}$, $N_{R'_{s_i}}$, $N_{B''_{s_i}}$, $N_{R''_{s_i}}$ each be
the subsets of nodes $\{1,\dots,n\}$
whose messages $i$'s broadcast calls returned in its respective sets
$B'_{s_i}$, $R'_{s_i}$, $B''_{s_i}$, $R''_{s_i}$
(Definition~\ref{def:tsb}).
By the above independence assumption,
the network adversary's choices of these sets
does not depend on the content of messages or their priority values.

If the set $B''_{s_i}$ returned from $i$'s second broadcast
contains the round's unique globally-best history $\hat{h}_s$,
which exists due to our exclusion of ties above,
then $i$ will necessarily choose $\hat{h}_s$ and deliver it.
This is because $\hat{h}_s$ must also be in $R''_{s_i}$ and in $R'_{s_i}$,
and no other proposal exists in either set with priority
greater than or equal to that of $\hat{h}_s$.

This desirable event that $\hat{h}_s \in B''_{s_i}$ occurs
if at least one node $j \in N_{B''_{s_i}}$
chose $\hat{h}_s$ as its intermediate history $h''_{s_i}$
and broadcast it in $j$'s second call to \broadcast.
Since the probability of this event occuring for {\em at least one} node $j$
is no less than the probability of this event occuring
for {\em any specific} node $j \in N_{B''_{s_i}}$,
we now conservatively focus on analyzing this probability
of any specific such node $j \in N_{B''_{s_i}}$ choosing $\hat{h}_s$.

If the set $B'_{s_j}$ returned from $j$'s first broadcast
contains the round's unique globally-best history $\hat{h}_s$,
then $j$ will necessarily choose $h''_{s_j} = \hat{h}_s$
and broadcast it in $j$'s second \broadcast call.
This desirable event occurs in turn
if $N_{B'_{s_j}}$ includes the node $k$ that proposed
the unique globally-best history $\hat{h}_s$ in this round.
Since all nodes choose their priorities from the same random distribution,
each node has an equal chance of proposing
the globally-best history $\hat{h}_s$.
Since $|N_{B'_{s_j}}| \ge t_b$,
node $j$ therefore sees $\hat{h}_s$ in its set $B'_{s_j}$
with a probability of at least $t_b/n$.

Node $i$ therefore sees $\hat{h}_s$ in its set $B''_{s_j}$
and delivers a history in this round
with a probability of at least $t_b/n$.
\end{proof}

\thmqsc*
\begin{proof}
\emph{Liveness:}
\qsc regularly advances time forever on non-failing nodes
by calling \broadcast twice each time through an infinite loop,
at each step
delivering a history with some independent nonzero probability
(Lemma~\ref{lem:liveness}).
These delivered histories grow in length by one message
each time through the loop.
Therefore, if $h$ is the longest history delivered by time-step $s$
on a non-failing node $i$,
then with probability 1 there is eventually some future time-step $s' > s$
at which node $i$ delivers a longer history $h'$ ($|h'| > |h|$),
thereby satisfying liveness.

\emph{Validity:}
\xxx{ES: $\delta = 2$}
If \qsc invokes $\deliver(h' || p)$ at step $s'$ on node $j$,
then by the \tsb receive threshold property
$p$ is a proposal $\proposal{i,m,r}$
that some node $i$ appended to its internal history $h_i$
and broadcast at step $s = s'-2$,
at the beginning of the same \qsc round (main loop iteration).

\emph{Consistency:}
If \qsc delivers $h$ at step $s$ on node $i$,
then delivers $h'$ at step $s' \ge s$ on node $j$,
then by induction over $s'-s$,
using Lemma~\ref{lem:agreement} as the base case,
and using Lemma~\ref{lem:preservation} in the inductive step,
$h$ must be a prefix of $h'$.
\end{proof}

\subsection{Threshold Logical Clocks (\tlc)}
\label{sec:proofs:tlc}

This section contains correctness proofs
for the threshold logical clock algorithms
in Section~\ref{sec:tlc}.

\thmtlcr*
\begin{proof}
Provided \tlcr terminates,
it satisfies the \tsb's lock-step synchrony property (Definition~\ref{def:tsb}),
because $|\vec{R}|$ represents the current time-step at each invocation
counting from 2,
and each \tlcr call adds exactly one element to $\vec{R}$.
Because at most $f \le n - t_r$ nodes can fail,
each non-failed node eventually receives a threshold $t_r$
of messages from the $t_r$ non-failed nodes at each time-step,
ensuring that each \tlcr call eventually terminates
and successfully advances logical time.

\tlcr satisfies the \tsb receive threshold property by construction,
\ie, by not returning until it accumulates and returns a receive-set $R$
of size at least $t_r$ --
or until it obtains such a set $R$ all at once
by catching up to another node via a message from a future time-step.
Because messages are pairwise-ordered between nodes,
the condition $s' > |R|$ implies $s' = s+1$.
Because the internal receive-sets consist of pairs $\tlcmsg{j,m'}$
representing the sending node $j$ and message $m'$
that node $j$ broadcast in the same time-step,
the returned set $R$ contains messages sent by at least $t_r$ nodes
even if multiple nodes send the same message,
ensuring that the required node-set $N_R$ exists (Definition~\ref{def:tsb}).

\tlcr trivially satisfies
the broadcast threshold and broadcast spread properties
in Definition~\ref{def:tsb}
by always returning an empty broadcast set $B$,
thereby making no broadcast threshold promises to be fulfilled.
\end{proof}

\thmtlcbpartial*
\begin{proof}
In the second \tlcr call,
each node $i$ collects at least $t_r$ nodes' receive-sets
from the first \tlcr call,
each of which contains at least $t_r$ nodes' first-round messages.
We represent node $i$'s observations as a \emph{view matrix}
with $t_r$ rows (one per-receive set) and $n$ columns (one per node),
such that each cell $j,k$ contains 1
if $i$'s receive-set $j$
indicates receipt of node $k$'s message from the first \tlcr round,
and 0 otherwise.

Node $i$'s $t_r \times n$ view matrix
contains at least $t_r^2$ one bits,
and hence at most $t_r (n-t_r)$ zero bits.
To prevent $t_b$ nodes' messages from reaching at least $t_s$ nodes each
in $i$'s view,
the network must schedule the deliveries seen by $i$
so that at least $n-t_b+1$ columns of $i$'s view matrix
each fail to contain at least $t_s$ one bits.
Each such failing column
must contain at least $t_r-t_s+1$ zero bits.
Since there are at most $t_r (n-t_r)$ zero bits total,
there can be at most $f_b = t_r (n-t_r) / (t_r-t_s+1)$ failing columns.
The matrix must therefore have at least
$n-f_b$ non-failing columns
representing reliable broadcasts to at least $t_s$ nodes each.
\tlcb therefore satisfies the required broadcast threshold $t_b$
since $t_b \le n-f_b$.
\end{proof}

\thmtlcbfull*
\begin{proof}
By Theorem~\ref{thm:tlcb-partial},
the returned broadcast set $B$
contains the messages sent by at least $t_b$ nodes
in the first \tlcr step.
Consider any such message $m \in B$
and any node $i$ that completes this \tlcb step without failing.

By construction, node $i$'s set $B$ contains only messages
$i$ knows have been received by at least $t_s$ nodes.
Therefore,
there is some set $N_s \subseteq \{1,\dots,n\}$ of nodes
such that $|N_s| \ge t_s$,
and for each node $j \in N_s$,
the intermediate receive set $R'$ on node $j$ contains $m$.

Further,
due to the receive threshold $t_r$ enforced by \tlcr,
the message set $R''$ returned on node $i$
must contain the intermediate message sets $R'$
that were returned on at least $t_r$ nodes.
That is, there is some set $N_r \subseteq \{1,\dots,n\}$ of nodes
such that $|N_r| \ge t_r$,
and for each node $j \in N_r$,
the intermediate receive set $R'$ returned on node $j$
is a subset of $R''$ on node $i$.

Because $t_r + t_s > n$,
the sets $N_s$ and $N_r$ must therefore overlap by at least one node $k$.
Node $k$ therefore received message $m$ in its intermediate set $R'$,
and thus in turn must have passed $m$ on to $i$ via the second \tlcr step.
Therefore, message $m$ must be in the receive set finally
returned by \tlcb on node $i$.
Since this applies to all messages $m \in B$ and all nodes $i$,
\tlcb therefore implements $\tsb(t_r,t_b,n)$ full-spread synchronous broadcast.
\end{proof}

\thmtlcw*
\begin{proof}
\tlcw satisfies the \tsb's lock step synchrony because 
(a) each call to \tlcw only adds one element to $\vec{R}$, which represents the current time-step,
if it terminates, 
and (b) because at most $f \le n - t_b$ nodes can fail, each
non-failed node eventually receives at least $t_b$ messages from
the non-failed nodes guaranteeing that each call to \tlcw eventually 
terminates and advances the logical time.
\tlcw satisfies both the broadcast threshold and broadcast spread properties
by construction. Specifically, \tlcw does not return until it accumulates and returns a broadcast-set
$B$ of size at least $t_b$ or until it obtains such a set by catching up to another node via a message
from a future time step. The returned broadcast-set $B$ consists of at least $t_b$ fully witnessed messages $m'$ 
(satisfying broadcast threshold), where each node $j$ announces that
  $\tlcmsg{j,m'}$ has been fully witnessed only after 
its message $m'$ was acknowledged by $t_s$ nodes (satisfying broadcast spread), given that at most $f$ nodes can fail
and $t_b, t_s \leq n-f $. Finally, since $B \subseteq R$, we get $t_r \geq t_b$.
\end{proof}

\thmtlcf*
\begin{proof}
The proof is identical in essence to that of Theorem~\ref{thm:tlcb-full}.
\end{proof}

\section{\qsc model in Erlang}
\label{sec:erl}

To illustrate \qsc more concretely,
this section lists a full working model implementation of 
\qsc atop \tlcb and \tlcr in
\href{https://www.erlang.org}{Erlang}~\cite{armstrong13programming}.
The model implements nodes as Erlang processes interacting via message passing,
in less than 73 code lineas as counted by \texttt{cloc}~\cite{cloc}.
Of these, only 37 code lines comprise the consensus algorithm itself,
the rest representing test framework code.

Erlang is particularly well-suited to modeling \qsc,
being a distributed functional programming language with a concise syntax.
As a result, the actual working Erlang code
is not much longer in line count than the pseudocode
in Algorithms~\ref{alg:qsc}, \ref{alg:tlcr}, and~\ref{alg:tlcb}
that it implements.

Erlang's selective receive capability~\cite{vinoski12concurrency},
in particular,
simplifies implementation of \tlcr.
Selective receive allows \tlcr
to receive messages for the current time-step
and discard messages arriving late for past time-steps,
while saving messages arriving early for future time-steps
in the process's mailbox for later processing.

\arxiv{
\lstloadlanguages{Erlang}
}{}

\begin{tiny}

\subsection{\texttt{qsc.erl}: Erlang code listing}
\lstinputlisting[language=Erlang,columns=flexible]{src/erlang/qsc.erl}

\end{tiny}

\section{\qscod: Client-driven on-demand \qsc with \tlcb}
\label{sec:odb}

\begin{algorithm*}[t!]
\caption{\qscod: client-driven execution of \qsc over \tlcb}
\SetCommentSty{textnormal}
\DontPrintSemicolon
\SetKwInOut{Config}{Configuration}
\SetKwInOut{State}{Global state}
\SetKwInOut{Sync}{Synchronization}
\SetKwInOut{Input}{Function input}
\SetKwInOut{Output}{Function output}
\SetKwFor{Forever}{forever}{}{end}

\Config{node number $i$ this thread drives, number of nodes $n$,
	thresholds $t_r \le n$ and $t_s \le n$}
\Config{functions \randval, $\kvwrite_i$, $\kvread_i$, \deliver}
\BlankLine

\State{client cache $C$ of servers' state;
	$C_{j,k}$ holds value stored on server $j$ under key $k$ if known}
\Sync{$\waitm() \rightarrow m$
	waits for and returns the next message this client wishes to commit}
\Sync{$\kvwait(k) \rightarrow R$
	waits to collect and return set $R$
	of cached values $C_{j,k}$
	from $\ge t_r$ nodes}

\BlankLine

$q \leftarrow 1$
	\tcp*{ficticious initial round}
$h \leftarrow \{\}$
	\tcp*{empty history for ficticious initial round}
$m \leftarrow$ \waitm() \tcp*{wait for first message this client
	wishes to commit}
\Forever(\tcp*[f]{loop until any client thread
		determines $m$ was committed}){}{
$q \leftarrow q+1$
	\tcp*{advance consensus round number}
$r \leftarrow \randval()$ \tcp*{choose proposal priority using private randomness}
$h' \leftarrow h_c \leftarrow \tlcmsg{\hash(h),m,r}$
	\tcp*{this client's preferred proposal for this round}

\BlankLine
$\kvwrite_i(\angles{q,1},\angles{h,h'})$
	\tcp*{try to record inputs to first \tlcb}
$\angles{h,h'} \leftarrow C_{i,\angles{q,1}}
		\leftarrow \kvread_i(\angles{q,1})$
	\tcp*{finalize our view of inputs to first \tlcb}
$R'_1 \leftarrow \kvwait(\angles{q,1})$
	\tcp*{tentative \tlcb initial broadcast outcome}

\BlankLine
$\kvwrite_i(\angles{q,2},R'_1))$
	\tcp*{try to record inputs to first \tlcb re-broadcast}
$R'_1 \leftarrow C_{i,\angles{q,2}} \leftarrow \kvread_i(\angles{q,2})$
	\tcp*{finalize our view of inputs to first \tlcb re-broadcast}
$R''_1 \leftarrow \kvwait(\angles{q,2})$
	\tcp*{tentative \tlcb re-broadcast outcome}
\BlankLine

$R_1 \leftarrow \bigcup(\{R'_1\} \cup R''_1)$
	\tcp*{tentative receive-set return from first \tlcb}
$B_1 \leftarrow \{m'\ |\ $at least $t_s$ messages-sets in $R''_1$ contain $m'\}$
	\tcp*{tentative broadcast-set return from first \tlcb}
$h'' \leftarrow$ any best history in $B_1$
	\tcp*{tentative history input to second \tlcb}

\BlankLine
$\kvwrite_i(\angles{q,3},\angles{R_1,B_1,h''})$
	\tcp*{try to record inputs to second \tlcb}
$\angles{R_1,B_1,h''} \leftarrow C_{i,\angles{q,3}}
		\leftarrow \kvread_i(\angles{q,3})$
	\tcp*{finalize our view of inputs to second \tlcb}
$R'_2 \leftarrow \kvwait(\angles{q,3})$
	\tcp*{tentative \tlcb initial broadcast outcome}

\BlankLine
$\kvwrite_i(\angles{q,4},R'_2)$
	\tcp*{try to record inputs to second \tlcb re-broadcast}
$R'_2 \leftarrow C_{i,\angles{q,4}} \leftarrow \kvread_i(\angles{q,4})$
	\tcp*{finalize our view of inputs to second \tlcb re-broadcast}
$R''_2 \leftarrow \kvwait(\angles{q,4})$
	\tcp*{tentative \tlcb re-broadcast outcome}

\BlankLine
$R_2 \leftarrow \bigcup(\{R'_2\} \cup R''_2)$
	\tcp*{tentative receive-set return from second \tlcb}
$B_2 \leftarrow \{m'\ |\ $at least $t_s$ messages-sets in $R''_2$ contain $m'\}$
	\tcp*{tentative broadcast-set return from second \tlcb}
$h \leftarrow$ any best history in $R_2$
	\tcp*{tentative history outcome for this round}
\If(\tcp*[f]{history $h$ has no competition}){$h = h_c$ and $h \in B_2$ and $h$ is uniquely best in $R_1$}{
	$\deliver(h)$
		\tcp*{deliver newly-committed history containing $m$}
	$m \leftarrow$ \waitm() \tcp*{wait for next message this client
		wishes to commit}
}
}

\label{alg:qscod}
\end{algorithm*}

To provide a concrete illustration
of the on-demand approach to implementing \qsc and \tlc
outlined in Section~\ref{sec:od},
Algorithm~\ref{alg:qscod} shows pseudocode for
client-driven \qsc built atop full-spread \tlcb
(Section~\ref{sec:tlcb:full}).

In \qscod,
each client node wishing to submit proposals and drive consensus
locally runs $n$ concurrent instances of Algorithm~\ref{alg:qscod},
typically in separate threads,
one for each of the $n$ servers providing key-value stores.
Initially and after each successful commitment of a client's proposal,
the client invokes \waitm to wait for the next message
to submit as a proposal. 
The client may be quiescent for arbitrarily long in \waitm,
during which the client produces no interaction with the servers
(but other clients can propose messages and drive consensus
in the meantime).

When \waitm returns the next message $m$ to be committed,
each client thread actively drives the key-value state
of its respective server forward --
in local cooperation with other client threads driving other servers --
to complete as many consensus rounds as necessary
to commit the client's proposed message $m$.

Since each \qsc round invokes \tlcb twice,
which in turn invokes \tlcr twice,
each consensus round requires four \tlcr time-steps.
We could model each \tlcr round as having $t_r+1$ state transitions:
one representing a given node $i$'s initial broadcast,
the rest for each of the $t_r$ messages subsequently ``received'' by $i$
as its condition to advance logical time.
It is possible and more efficient, however,
to summarize the effects of all simulated message ``receives'' in a time-step
as part of the \kvwrite representing the node's {\em next} broadcast.
With this approach, Algorithm~\ref{alg:qscod} requires
only four pairs of \kvwrite/\kvread requests
to each server per consensus round,
one pair for each of the four total \tlcr invocations.

Coordination between the simulated consensus nodes
occurs via the client's locally-shared cache $C$ of key-value pairs
that have been read from the $n$ servers so far.
After attempting to write a value to a key,
then reading back that key to learn what value was actually written
by the ``winning'' client,
each client thread invokes \kvwait
to wait until $t_r$ total threads
also write and read corresponding values for that key.
The client thread representing node $i$
takes this locally-determined set as a tentative, {\em possible}
receive-set of size $t_r$ for node $i$ --
but neither this nor anything computed from it
may be considered ``definite'' until
the next \kvwrite/\kvread pair commencing the next \tlcr time-step.

Each client-side instance of \qscod evaluates the \qsc finality conditions,
deciding whether the client's message has been successfully committed,
based on tentative information not yet finalized on the corresponding server.
This may seem like a problem, but is not.
Like any actual broadcast-based server implementation of \qsc,
a client thread will observe the finality conditions for history $h$
only when it is ``inevitable'' that all servers commit history $h$ --
regardless of whether or not they know that $h$ is committed.
A client thread might observe that $h$ is final,
deliver it to the application,
then lose a race to commit that result to the server
at the start of the next time-step --
but this means only that the simulated server
does not ``know'' that $h$ is committed,
even though the client in question (correctly) knows this fact.

\com{
\subsection{Complexity analysis}

If messages $m$ are of $O(1)$ size --
or if messages are $O(n)$ size but we use $O(1)$-size cryptographic hashes
to reference them in the message-sets written to the servers --
then all of the \kvwrite and \kvread values in \qscod
are likewise of $O(n)$ size.
Notice for example that \qscod never needs to communicate
a ``set of sets'' ...

Because \qscod computes all the relevant results
that depend on a particular time-step's receive-set,
then attempts to \kvwrite them to the server at once,
only the size of those computed results incur communication costs.

}

\xxx{ note: if a client fails,
leaving a \qsc consensus round only partially-completed,
the next client will automatically catch up to this partial state
and complete the consensus round.
If it is important not to leave consensus rounds partially-complete
during periods of quiescence,
then a garbage-collection processing acting like a client
can periodically finish any partially-completed consensus rounds.
}

\subsection{Model \qscod implementation in Go}
\label{sec:odb:go}

To illustrate the operation of \qscod more concretely,
this section finally presents a simple but fully-functional
model implementation of \qscod in the \href{https://golang.org}{Go language}.
The model implements nodes as goroutines communicating via shared memory
instead of via real network connections,
and is only 200 code lines as counted by \texttt{cloc}~\cite{cloc}
including test infrastructure (less than 125 lines without).
Despite its simplicity and limitations,
this model implements all the fundamental elements of \qscod,
and can operate in truly distributed fashion by filling in
core for remote access to key/value stores representing the consensus nodes,
including the marshaling and unmarshaling of stored values.
The latest version of this model may be found at
\url{https://github.com/dedis/tlc/tree/master/go/model/qscod}.

\long\def\gomodel{

\arxiv{
\lstloadlanguages{Go}
}{}

\begin{tiny}

\subsection{\texttt{cli.go}: \qscod client model}
\lstinputlisting[columns=flexible,tabsize=2]{src/go/cli.go}

\subsection{\texttt{cli\_test.go}: \qscod client testing framework}
\lstinputlisting[columns=flexible,tabsize=2]{src/go/cli_test.go}

\end{tiny}

}

\arxiv{
	\lstset{language=Go}
	\gomodel
}{
	\gomodel
}

\end{document}